\definecolor{SS}{rgb}{0.0, 0.0, 0.0}
\definecolor{MS}{rgb}{0.0, 0.0, 0.0}
\shorttitle{The M-function}
\begin{document}

\title{Effect of Thermal Emission in Isotropic Scattering Atmospheres: An Invariant-Embedding Extension of Chandrasekhar’s $H(\mu)$ -Function}

\correspondingauthor{Soumya Sengupta}
\email{ssengupta@ics.csic.es}
\email{soumyasenguptawb@gmail.com}

\author[0000-0002-7006-9439]{Soumya Sengupta}
\affiliation{Institut de Ci\`encies de I'Espai (ICE-CSIC), Campus UAB, Carrer de Can Magrans s/n, 08193 Cerdanyola del Vallès, Barcelona, Catalonia, Spain}
\author[0000-0002-2858-9385]{Manika Singla}
\affiliation{Physical Research Laboratory, Ahmedabad, 380009, India}
\affiliation{Planetary Atmospheres Group, Institute for Basic Science (IBS), Daejeon, South Korea}

\author[0000-0001-8357-340X]{Fikret Anlı}
\affiliation{Kahramanmaraş Sütçü İmam University, Faculty of Sciences, Physics Department, Kahramanmaraş/Turkey}


\begin{abstract}
Chandrasekhar’s $H(\mu)$-function forms the foundation of radiative transfer theory for semi-infinite, isotropically scattering atmospheres under external illumination. However, the classical formulation does not account for thermal emission from internal heat sources, which is essential in many astrophysical environments, including hot Jupiters, brown dwarfs, and strongly irradiated exoplanets, where re-radiated stellar energy significantly alters the emergent intensity. To address this limitation, we extend Chandrasekhar’s diffuse reflection framework by incorporating intrinsic thermal emission within the invariant-embedding formalism developed by \cite{bellman1967chandrasekhar}. In this approach, thermal emission enters as an embedded invariant contribution to the source function, leading to a generalized angular redistribution function $M(\mu)$ \citep{sengupta2021effects}. Building on the formulation of \cite{Chandrashekhar} and its recent extension, we derive the governing non-linear integral equations for $M(\mu)$ and express them in terms of the direction cosine $\mu$, the thermal emission coefficient $U(T)=B(T)/F$, and the single-scattering albedo $\tilde{\omega}_0$. High-precision numerical values of $M(\mu,U,\tilde{\omega}_0)$ are computed for $\mu\in[0,1]$, $U<0.7$, and $\tilde{\omega}_0<1$ using a stable iterative scheme based on Gaussian quadrature. In the limit of vanishing thermal emission, the formulation reduces to Chandrasekhar’s classical $H(\mu)$-function, validating the approach. As an application, we consider the ultra-short-period exoplanet K2-137b and identify the wavelength range $0.85$--$2.5~\mu$m where the model is most applicable, corresponding to the capabilities of \textit{JWST}, \textit{HST}, and \textit{ARIEL}.

\end{abstract}

\keywords{Radiative transfer (1335), Planetary atmospheres (1244), Exoplanets (498), Hot Jupiters (753)}


\section{Introduction}\label{sec: intro}

Radiative transfer theory stands as a cornerstone of modern astrophysics, providing the mathematical and physical framework necessary to understand radiation propagation through scattering and absorbing media as shown in the classical works by \cite{Chandrashekhar} and \cite{mihalas1978stellar}. Its relevance extends across the astrophysical spectrum—from stellar interiors and photospheres to planetary and exoplanetary atmospheres. In particular, the foundational work presented in \cite{Chandrashekhar} on radiative transfer in a semi-infinite atmosphere, and the introduction of the Chandrasekhar $H(\mu)$-function, has had enduring significance. The $H(\mu)$-function, where $\mu$ is the direction cosine of the angle between the scattering plane and the pencil of radiation originally formulated to describe diffuse reflection in scattering media, remains central to contemporary models of planetary reflection and energy balance. 

In the astrophysical context, the theory of radiative transfer is indispensable—from the study of the Sun to the study of exoplanets. Classical solutions, while powerful, must be revisited in environments where internal heating or thermal emission cannot be ignored. \cite{bellman1967chandrasekhar} extended the original Chandrasekhar planetary problem to include internal heating sources in terms of invariant embedding to pave the way for incorporating thermal emission into the diffuse reflection framework.

This extension is particularly relevant in the context of irradiated giant exoplanets such as Hot Jupiters, where anomalously inflated radii point toward significant internal heat sources \citep{vigano2025inflated,komacek2017structure,batygin2010inflating}. For habitable worlds and Hot Jupiters alike, transmission and reflection spectroscopy are key observational tools, and the diffuse reflection theory of \cite{Chandrashekhar} continues to serve as the theoretical backbone in these analyses \citep{singla2023effect,singla2023new}. Even the theory is directly useful in modeling the planetary atmosphere \cite{madhusudhan2012analytic} as well as to determine the amount of heat redistribution through convection in the tidally locked hot Jupiters \cite{sengupta2023atmospheric}.

Chandrasekhar’s $H$-function has also found broader applications beyond planetary atmospheres. For instance, \cite{jablonski2019chandrasekhar} demonstrated its utility in modeling photoelectron transport, while \cite{jablonski2015chandrasekhar} presented a numerically stable and precise method for its evaluation in the isotropic scattering limit. The theoretical structure surrounding the $H$-function, including its integral properties, iterative solutions \citep{bosma1983efficient}, and moment expansions \citep{anli2021alternative}, has been rigorously studied. Its values for both isotropic and asymmetric scattering scenarios are tabulated in \cite{chandrasekharBreen1947radiative,chandrasekhar1948radiative}, with detailed discussion of the function’s moments and derivatives with better accuracy is  in \cite{das2007numerical}.

Furthermore, Chandrasekhar’s semi-infinite atmosphere theory finds direct application in areas ranging from planetary atmosphere modeling to secondary electron emission in ion-solid interactions \citep{dubus1986theoretical}. Nevertheless, traditional formulations of the $H$-function fail to capture scenarios where thermal emission and scattering are comparably significant.

This shortfall was first addressed in \cite{sengupta2021effects} (hereafter \citetalias{sengupta2021effects})
, where it was shown that thermal emission alters the source function in the semi-infinite diffuse reflection problem by contributing an additive term $U(T)f(\mu)$, where $f$ is a function of the direction cosine $\mu$. For the case of isotropic scattering, they established a specific functional form $f(\mu) = M(\mu)$, introducing what is now referred to as the $M$-function, analogous in form to the classical $H$-function but extended to account for thermal emission. Hence, the $M(\mu)$-function is introduced as a generalized angular distribution function for thermally active semi-infinite atmospheres, in which embeded invariant internal thermal emission  contributes directly to the radiation field and cannot be treated solely as an additive external source, while reducing exactly to the classical $H(\mu)$-function in the limit of vanishing emission.

The diffusely reflected specific intensity from a thermally emitting, isotropically scattering, semi-infinite atmosphere is given by \citetalias{sengupta2021effects}:

\begin{equation}\label{eq: specific intensity}
    I(0,\mu;\mu_0) = \frac{F}{4}\frac{\mu_0}{\mu+\mu_0} \left[4U(T) + \tilde{\omega}_0M(\mu_0)\right]M(\mu)
\end{equation}

Here, $\pi F$ is the incident flux along the direction $(-\mu_0, \phi_0)$, while the diffuse reflection is observed along $(\mu, \phi)$. The dimensionless \textit{thermal emission coefficient} is defined as $U(T) = B(T)/F$, where $B(T)$ is the Planck function. The $M$-function introduced in this context is a function of $\mu$, $U$, and the single scattering albedo $\tilde{\omega}_0$, i.e. $M(\mu,U,\tilde{\omega}_0)$ distinguishing it from the classical $H(\mu, \tilde{\omega}_0)$-function as given in \cite{Chandrashekhar}.

The physical interpretation and formal structure of $M(\mu, U, \tilde{\omega}_0)$ were further discussed in \cite{sengupta2022atmospheric} in relation to generalized $V(\mu)$ and $W(\mu)$ functions used for finite atmosphere models. However, those studies remained primarily qualitative. To date, no comprehensive effort has been made to quantify the $M$-function in its full parametric dependence for practical use in astrophysical applications.

In this work, we resolve this gap by formulating a rigorous derivation and complete numerical evaluation of the $M(\mu,U,\tilde{\omega}_0)$ function as defined in \citetalias{sengupta2021effects}. We adopt a methodology parallel to that employed by \cite{Chandrashekhar} in the classical treatment of the $H$-function: first deriving the integral theorems governing the $M$-function, then treating it as a function of three physical parameters—$\mu$, $U$, and $\tilde{\omega}_0$. We compute its values across a physically meaningful range of these parameters. These values are directly applicable in calculating reflected specific intensities via Eq.~\eqref{eq: specific intensity}, thus equipping the community with a practical tool for modeling thermal emission and scattering in planetary and exoplanetary atmospheres. The results presented here are fully consistent with, and extend beyond, the well-established literature on classical radiative transfer.

{The key novelty of this work is the dimensionless parameter $U(T)=B(T)/F$, which  represents the ratio of intrinsic thermal emission to incident irradiation. Physically, it sets the relative strength of locally generated photons compared to externally supplied radiation. While both thermal emission and scattering ultimately depend on the atmospheric state, within the present invariant-embedding framework $U(T)$ and the single-scattering albedo $\tilde{\omega}_0$ are treated as independent control parameters: $U(T)$ determines photon production, whereas $\tilde{\omega}_0$ governs their redistribution.}

The plan of the paper goes as follows. In Section ~ \ref{sec: Theorems of M-function}, we have derived the theorems applicable for $M(\mu,U,\tilde{\omega}_0)$ function introduced in \citetalias{sengupta2021effects}.Section ~ \ref{sec: value estimation} is devoted in estimating the values of M as a function of direction cosine $\mu=\cos\theta$, thermal emission co-efficient U and single scattering albedo $\tilde{\omega}_0$. For simplicity, this section is further divided into three subsections and discussed only emission in \ref{subsec: Only emission}, only scattering in \ref{subsec: Only Scattering} and simultaneous emission as well as scattering in \ref{subsec: emission+scattering}. In section ~ \ref{sec: consistency}, we discussed the consistency limit of our results with the previous studies. Finally we concluded by discussing the results, limitation and future works in Section~\ref{sec: discussion} follows with a general conclusion in Section~\ref{sec: conclusion}

\section{Theorems of M-Function}\label{sec: Theorems of M-function}
The present formulation is intended for isotropically scattering, semi-infinite atmospheres with embedded thermal emission treated within an invariant-embedding framework, and is not designed to reproduce classical Milne solutions except in well-defined asymptotic limits. Here $M(\mu)$ is treated as a generalized angular distribution function for coupled scattering and embedded thermal emission, rather than as a replacement for Chandrasekhar’s classical $H(\mu)$-function. We derive exact integral relations satisfied by the $M(\mu)$-function, analogous to the classical integral theorems of $H(\mu)$-function \cite{Chandrashekhar}. These relations characterize the global moments of the radiation field in a semi-infinite atmosphere where isotropic scattering and embedded thermal emission act simultaneously.

At first, we show the constrains on the zeroth-order moment of $M(\mu)$, governing the total reflected--emitted flux and defining a convergence condition for the coupled scattering--emission problem. Then we extend this analysis to higher-order moments and demonstrates how thermal emission modifies the mathematical structure of the solution by introducing a cubic dependence in $\mu$, which is absent in the classical scattering-only case. Together, these relations establish the domain of validity of the $M(\mu)$-function and provide practical consistency checks for numerical solutions.

Here we write the $M(\mu,U,\tilde{\omega}_0)$ function as $M(\mu)$ for simplifying the mathematical expressions. However the meaning remains the same. The functional form of $M(\mu)$-function is defined as \citetalias{sengupta2021effects},
\begin{equation}\label{eq: M-function}
M(\mu)= 1+2U(T)M(\mu)\mu \log(1+\frac{1}{\mu})+ \mu M(\mu) \frac{\tilde{\omega_0}}{2} \int_0^1 \frac{M(\mu')}{\mu+\mu'}d\mu'
\end{equation}

{where the second term on RHS represents the contribution of embedded thermal emission and the third term on RHS accounts for isotropic scattering.}

{Note that, here $U(T)$ acts as a source-strength parameter which controls the thermal photon to incident photon ratio, in contrast to $\tilde{\omega}_0$, which controls scattering.}

Here we define the moments of M-function as follows,
\begin{equation}\label{eq: M-moments}
    A_n = \int_0^1 \mu^n M(\mu) d\mu
\end{equation}

and the thermal emission contribution to the values of M-function,

\begin{equation}\label{eq: R-formula}
    R= 1+ 2U(T)\int_0^1 M(\mu) \mu log(1+\frac{1}{\mu}) d\mu
\end{equation}
\newtheorem{theorem}{Theorem}
\newtheorem{corollary}{Corollary}[theorem]

\begin{theorem}\label{thrm: THEOREM 1}

The integration of M-function can be written as,

\begin{equation}\label{eq: theorem 1}
\boxed{
\frac{\tilde{\omega_0}}{2}\int_0^1 M(\mu) d\mu ={\frac{\tilde{\omega_0}}{2}A_0}= 1-[1-\tilde{\omega}_0R]^\frac{1}{2}
}
\end{equation}
\end{theorem}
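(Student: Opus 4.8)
The plan is to mimic Chandrasekhar's classical argument for $\frac{\tilde\omega_0}{2}\int_0^1 H(\mu)\,d\mu = 1 - (1-\tilde\omega_0)^{1/2}$, adapting it to the extra thermal term. First I would multiply the defining equation \eqref{eq: M-function} through by the symmetric kernel structure: start from the nonlinear relation written as
\begin{equation}\label{eq: proof start}
\frac{1}{M(\mu)} = 1 + 2U(T)\mu\log\!\Bigl(1+\frac{1}{\mu}\Bigr) - \frac{\tilde\omega_0}{2}\,\mu\int_0^1 \frac{M(\mu')}{\mu+\mu'}\,d\mu',
\end{equation}
obtained by dividing \eqref{eq: M-function} by $M(\mu)$ and rearranging. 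Hmm — wait, the sign: from \eqref{eq: M-function}, $1 = M(\mu)\bigl[\tfrac{1}{M(\mu)} - 2U\mu\log(1+1/\mu) - \tfrac{\tilde\omega_0}{2}\mu\int\cdots\bigr]$, so actually $\tfrac{1}{M(\mu)} = \tfrac{1}{M(\mu)}$ trivially; the correct manipulation is to keep \eqref{eq: M-function} as is and integrate it against $d\mu$ after isolating the pieces.

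The key steps, in order: (1) Integrate \eqref{eq: M-function} over $\mu\in[0,1]$. The left side is $A_0$. On the right, the constant gives $1$, the thermal term gives $2U(T)\int_0^1 M(\mu)\mu\log(1+1/\mu)\,d\mu$, which by definition \eqref{eq: R-formula} combines with the $1$ to give exactly $R$; the scattering term gives $\tfrac{\tilde\omega_0}{2}\int_0^1\int_0^1 \tfrac{\mu M(\mu)M(\mu')}{\mu+\mu'}\,d\mu'\,d\mu$. (2) Symmetrize the scattering double integral: by swapping $\mu\leftrightarrow\mu'$ and averaging, $\int\!\!\int \tfrac{\mu M(\mu)M(\mu')}{\mu+\mu'} = \tfrac12\int\!\!\int \tfrac{(\mu+\mu')M(\mu)M(\mu')}{\mu+\mu'} = \tfrac12 A_0^2$. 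Hence $A_0 = R + \tfrac{\tilde\omega_0}{4}A_0^2$. (3) Solve the quadratic for $\tfrac{\tilde\omega_0}{2}A_0$: setting $x = \tfrac{\tilde\omega_0}{2}A_0$, the relation becomes $x = \tfrac{\tilde\omega_0}{2}R + \tfrac12 x^2$, i.e. $x^2 - 2x + \tilde\omega_0 R = 0$, giving $x = 1 \pm (1-\tilde\omega_0 R)^{1/2}$. (4) Select the physical root: as $U\to 0$ and $\tilde\omega_0\to$ its classical value, $R\to 1$ and we must recover Chandrasekhar's $x = 1 - (1-\tilde\omega_0)^{1/2}$, so the minus sign is the correct branch; more directly, $A_0$ must stay bounded and vanish appropriately as $\tilde\omega_0\to 0$, which forces the $-$ root. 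This yields \eqref{eq: theorem 1}.

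The main obstacle I anticipate is not any single calculation but justifying the interchange of the order of integration in the double integral and, more subtly, confirming that $R$ as defined in \eqref{eq: R-formula} is itself well-defined (i.e. that $\int_0^1 M(\mu)\mu\log(1+1/\mu)\,d\mu$ converges) and that the branch selection is globally valid over the whole parameter domain $U<0.7$, $\tilde\omega_0<1$ rather than just near the origin — one needs $1-\tilde\omega_0 R \ge 0$ throughout, which is presumably the "convergence condition for the coupled scattering--emission problem" alluded to in the text and may require a separate monotonicity or fixed-point argument. The algebraic heart of the proof (steps 1--3) is routine once the symmetrization trick is in hand; everything delicate lives in the analytic justification and the domain of validity.
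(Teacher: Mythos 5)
Your proof is correct and follows essentially the same route as the paper: integrate the defining equation over $\mu$, symmetrize the double integral to obtain $A_0 = R + \tfrac{\tilde{\omega}_0}{4}A_0^2$, solve the resulting quadratic, and select the negative root from the $\tilde{\omega}_0\to 0$ limit --- this is exactly the argument given in the paper (the appendix states it in precisely your form, and the main-text version differs only in premultiplying by $\tilde{\omega}_0/2$ before integrating). The abandoned opening display has a sign slip (it should read $1/M(\mu)=1-2U(T)\mu\log(1+1/\mu)-\tfrac{\tilde{\omega}_0}{2}\mu\int_0^1\frac{M(\mu')}{\mu+\mu'}d\mu'$), but since you discard that route it does not affect the validity of the proof.
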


\begin{proof}
{We now derive a constraint on the zeroth moment $A_0$. The key step is to reduce the double integral term into a product of moments using symmetry.}

Multiplying eqn.\eqref{eq: M-function} by $\frac{\tilde{\omega_0}}{2}$ in both side and taking integration over $d\mu$ in the limit 0 to 1 we {obtain three contributions:
a constant term, a thermal emission term proportional to U(T), and a double integral term involving $M(\mu)M(\mu')$ as follows,}

\begin{equation*}
\begin{split}
\frac{\tilde{\omega_0}}{2} \int_0^1 M(\mu)d\mu = \frac{\tilde{\omega_0}}{2} +2U(T) \frac{\tilde{\omega_0}}{2} \int_0^1 M(\mu)\mu \log(1+\frac{1}{\mu}) d\mu +
(\frac{\tilde{\omega_0}}{2})^2 \int_0^1 \int_0^1 \frac{\mu}{\mu+\mu'}M(\mu')M(\mu)d\mu'd\mu
\end{split}
\end{equation*}

{The third term on RHS can be simplified by interchanging $\mu$ and $\mu'$ and taking the average of the two expressions. This symmetrization yields:}
\begin{equation*}
\begin{split}
\frac{\tilde{\omega_0}}{2} \int_0^1 M(\mu)d\mu =& \frac{\tilde{\omega_0}}{2} +2U(T) \frac{\tilde{\omega_0}}{2} \int_0^1 M(\mu)\mu \log(1+\frac{1}{\mu}) d\mu +
(\frac{\tilde{\omega_0}}{2})^2 \frac{1}{2} \int_0^1 \int_0^1 M(\mu')M(\mu)d\mu'd\mu\\
\implies \frac{1}{2}[(\frac{\tilde{\omega_0}}{2}) \int_0^1 M(\mu)d\mu]^2 -& [\frac{\tilde{\omega_0}}{2} \int_0^1 M(\mu)d\mu] + \frac{\tilde{\omega_0}}{2}\{1 +2U(T) \int_0^1 M(\mu)\mu \log(1+\frac{1}{\mu}) d\mu\}=0
\end{split}
\end{equation*}

This is a quadratic equation which has a solution as follows,
\begin{equation}\label{eq: theorem 1 plus-minus}
    \frac{\tilde{\omega_0}}{2}\int_0^1 M(\mu) d\mu = 1\pm[1-\tilde{\omega_0}\{ 1+ 2U(T)\int_0^1 M(\mu) \mu log(1+\frac{1}{\mu}) d\mu \}]^\frac{1}{2}
\end{equation}

In eqn.\eqref{eq: theorem 1 plus-minus} the left hand side uniformly converges to zero, when the single scattering albedo $\tilde{\omega}_0$ uniformly goes to zero. To satisfy the fact from the right side expression as well we will consider the negative sign rather than the positive sign. Thus it can be written as,
\
\begin{equation}\label{eq: theorem 1 elaborated}
     \frac{\tilde{\omega_0}}{2}\int_0^1 M(\mu) d\mu ={\frac{\tilde{\omega_0}}{2}A_0}= 1 - [1-\tilde{\omega_0}\{ 1+ 2U(T)\int_0^1 M(\mu) \mu log(1+\frac{1}{\mu}) d\mu \}]^\frac{1}{2}
\end{equation}

In this equation if we use eqn.\eqref{eq: R-formula} to replace R{, we} get {a} simplified form {which is equivalent to the final expression of \textit{Theorem} \ref{thrm: THEOREM 1} eqn.\eqref{eq: theorem 1}}.
\begin{equation*}
    \frac{\tilde{\omega_0}}{2}\int_0^1 M(\mu) d\mu = 1 - [1-\tilde{\omega_0}R]^\frac{1}{2}
\end{equation*}

{This relation represents a global flux constraint, modified by the presence of thermal emission through the parameter R. An alternate proof of this is given in appendix: \ref{appendix}}.
\end{proof}
\begin{corollary}\label{cor: 1.1}
The necessary condition for which the $M(\mu)$-function will be real can be written as, 
\begin{equation}\label{eq: cor 1.1}
    \frac{1-\tilde{\omega_0}}{\tilde{\omega_0}} \geqslant 2U(T)\int_0^1 M(\mu) \mu log(1+\frac{1}{\mu}) d\mu
\end{equation}

\end{corollary}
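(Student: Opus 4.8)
The plan is to read the condition straight off the closed form established in \textit{Theorem}~\ref{thrm: THEOREM 1}. First I would observe that Eq.~\eqref{eq: theorem 1} writes the quantity $\frac{\tilde{\omega}_0}{2}A_0 = \frac{\tilde{\omega}_0}{2}\int_0^1 M(\mu)\,d\mu$ as $1-[1-\tilde{\omega}_0 R]^{1/2}$. If $M(\mu)$ is a real (and, as demanded physically, non-negative and bounded) function on $[0,1]$, then $A_0$ is a finite real number, so the right-hand side must be real as well. Since the square root of a negative number is not real, this forces the radicand to satisfy $1-\tilde{\omega}_0 R \geqslant 0$, i.e.\ $\tilde{\omega}_0 R \leqslant 1$.

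Next I would substitute the definition of $R$ from Eq.~\eqref{eq: R-formula}, $R = 1 + 2U(T)\int_0^1 M(\mu)\,\mu\log(1+\tfrac{1}{\mu})\,d\mu$, into $\tilde{\omega}_0 R \leqslant 1$, obtaining $\tilde{\omega}_0 + 2\tilde{\omega}_0 U(T)\int_0^1 M(\mu)\,\mu\log(1+\tfrac{1}{\mu})\,d\mu \leqslant 1$. Dividing through by $\tilde{\omega}_0 > 0$ (the albedo is strictly positive in the scattering regime) and rearranging yields exactly Eq.~\eqref{eq: cor 1.1}. Equality is kept, since a vanishing radicand still gives a real (zero) square root, and in that borderline case $\frac{\tilde{\omega}_0}{2}A_0 = 1$.

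The only real subtlety here is interpretive rather than technical: the bound is implicit, because both $R$ and the integral appearing on the right of Eq.~\eqref{eq: cor 1.1} are themselves functionals of the unknown $M(\mu)$ solving Eq.~\eqref{eq: M-function}. Hence the corollary is best read as a self-consistency/convergence condition that any admissible fixed-point solution must obey — and, as the surrounding text notes, as a practical validity check on the iterative numerical scheme — rather than as an \emph{a priori} restriction on the pair $(U(T),\tilde{\omega}_0)$ in isolation. I would close by remarking on the limit $U(T)\to 0$: the right-hand integral then disappears and the condition degenerates to $\tilde{\omega}_0 \leqslant 1$, recovering the classical reality requirement for Chandrasekhar's $H(\mu)$-function and thereby confirming consistency with the scattering-only case.
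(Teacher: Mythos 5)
Your proof is correct and follows the same route as the paper: demand that the radicand $1-\tilde{\omega}_0 R$ in Theorem~\ref{thrm: THEOREM 1} be non-negative, substitute the definition of $R$ from Eq.~\eqref{eq: R-formula}, and divide by $\tilde{\omega}_0>0$ to obtain Eq.~\eqref{eq: cor 1.1}. Your added remarks on the self-consistency reading and the $U(T)\to 0$ limit go beyond the paper's proof but do not change the argument.
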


\begin{proof}
The right hand side of equation\eqref{eq: theorem 1} will be real only if,
\begin{equation*}
\begin{split}
&1  \geqslant \tilde{\omega_0}\{ 1+ 2U(T)\int_0^1 M(\mu) \mu log(1+\frac{1}{\mu}) d\mu \}\\
\implies
& \frac{1-\tilde{\omega_0}}{\tilde{\omega_0}} \geqslant 2U(T)\int_0^1 M(\mu) \mu log(1+\frac{1}{\mu}) d\mu
\end{split}
\end{equation*}
{This condition ensures that the argument of the square root in Eq. \eqref{eq: theorem 1} remains positive, thereby guaranteeing real and physically meaningful solutions for $M(\mu)$. It defines the allowed parameter space for the coupled scattering–emission problem.}
\end{proof}
\begin{corollary}\label{cor: 1.2}
An alternative integral equation can be formed as,
\begin{equation}\label{eq: cor 1.2}
\begin{split}
&\frac{1}{M(\mu)}= \frac{\tilde{\omega_0}}{2}\int_0^1 \frac{\mu'}{\mu+\mu'}M(\mu')d\mu' + [1-\tilde{\omega}_0R]^\frac{1}{2} - 2U(T)\mu \log(1+\frac{1}{\mu})
\end{split}
\end{equation}
\end{corollary}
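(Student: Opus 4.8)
The plan is to start directly from the defining integral equation \eqref{eq: M-function} and divide through by $M(\mu)$, which is legitimate wherever $M(\mu)\neq 0$. This immediately yields
\begin{equation*}
\frac{1}{M(\mu)} = 1 - 2U(T)\mu\log\!\left(1+\frac{1}{\mu}\right) - \frac{\tilde{\omega}_0}{2}\int_0^1 \frac{\mu\, M(\mu')}{\mu+\mu'}\,d\mu'.
\end{equation*}
The entire content of the corollary is then to rewrite this remaining integral so that the weight $\mu/(\mu+\mu')$ is replaced by $\mu'/(\mu+\mu')$, which converts the right-hand side into the stated form.

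The key algebraic step I would use is the elementary identity $\dfrac{\mu}{\mu+\mu'} = 1 - \dfrac{\mu'}{\mu+\mu'}$, so that
\begin{equation*}
\frac{\tilde{\omega}_0}{2}\int_0^1 \frac{\mu\, M(\mu')}{\mu+\mu'}\,d\mu' = \frac{\tilde{\omega}_0}{2}\int_0^1 M(\mu')\,d\mu' - \frac{\tilde{\omega}_0}{2}\int_0^1 \frac{\mu'\, M(\mu')}{\mu+\mu'}\,d\mu'.
\end{equation*}
At this point I would invoke \textit{Theorem}~\ref{thrm: THEOREM 1}, which evaluates the first term on the right-hand side as $\frac{\tilde{\omega}_0}{2}A_0 = 1 - [1-\tilde{\omega}_0 R]^{1/2}$. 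Substituting back, the additive constant $1$ cancels against the $1$ produced by Theorem~\ref{thrm: THEOREM 1}, and one is left with exactly
\begin{equation*}
\frac{1}{M(\mu)} = \frac{\tilde{\omega}_0}{2}\int_0^1 \frac{\mu'\, M(\mu')}{\mu+\mu'}\,d\mu' + [1-\tilde{\omega}_0 R]^{1/2} - 2U(T)\mu\log\!\left(1+\frac{1}{\mu}\right),
\end{equation*}
which is \eqref{eq: cor 1.2}.

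There is no genuine obstacle here: the result is a two-line consequence of \eqref{eq: M-function} together with \textit{Theorem}~\ref{thrm: THEOREM 1}. The only point requiring a word of care is the division by $M(\mu)$ — one should note that $M(\mu)>0$ on $[0,1]$ in the parameter regime where the reality condition \eqref{eq: cor 1.1} holds, so that $1/M(\mu)$ is well defined and the manipulation is valid throughout the relevant domain. It is also worth remarking that this alternative form is the natural analogue of Chandrasekhar's auxiliary equation for $1/H(\mu)$, and it is the version best suited to the iterative numerical scheme used later, since the kernel $\mu'/(\mu+\mu')$ is uniformly bounded by unity on the unit square.
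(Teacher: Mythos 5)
Your proposal is correct and is essentially the paper's own argument run in the opposite order: the paper decomposes the kernel via $\frac{\mu'}{\mu+\mu'}=1-\frac{\mu}{\mu+\mu'}$, invokes Theorem~\ref{thrm: THEOREM 1} for the zeroth-moment piece, substitutes back through Eq.~\eqref{eq: M-function}, and only then divides by $M(\mu)$, whereas you divide by $M(\mu)$ first and then apply the same splitting and the same theorem. The ingredients, and hence the proof, are the same; your added remark that $M(\mu)\neq 0$ justifies the division is a sensible (if minor) point the paper leaves implicit.
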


\begin{proof}

{To obtain an alternative representation of the M-function, we rearrange the integral term in Eq. \eqref{eq: M-function} by separating them into symmetric components.}
\begin{equation*}
\begin{split}
\frac{\tilde{\omega_0}}{2}M(\mu)\int_0^1 \frac{\mu'}{\mu+\mu'}M(\mu')d\mu'&= \frac{\tilde{\omega_0}}{2}M(\mu)\int_0^1[1- \frac{\mu}{\mu+\mu'}]M(\mu')d\mu'\\
=&
\frac{\tilde{\omega_0}}{2}M(\mu)\int_0^1M(\mu')d\mu' -  \frac{\tilde{\omega_0}}{2}M(\mu)\mu\int_0^1\frac{M(\mu')}{\mu+\mu'}d\mu'\\
=&
M(\mu)(1-[1-\tilde{\omega}_0R]^\frac{1}{2})-M(\mu)+1+2U(T)M(\mu)\mu \log(1+\frac{1}{\mu})\\
=&
1- M(\mu)[1-\tilde{\omega}_0R]^\frac{1}{2}+2U(T)M(\mu)\mu \log(1+\frac{1}{\mu})
\end{split}
\end{equation*}
{Dividing both sides by $M(\mu)$ and rearranging we get the final form of} the eqn.\eqref{eq: cor 1.2}. {This form is useful for numerical evaluation, as it expresses M(µ) explicitly in terms of integral quantities.}
\end{proof}
\begin{theorem}\label{thrm: THEOREM 2}
\begin{equation}\label{eq: theorem 2}
\boxed{
    \frac{\tilde{\omega}_0}{6}= \frac{1}{2}[\frac{\tilde{\omega}_0}{2}\int_0^1 M(\mu) \mu d\mu]^2 + [1-\tilde{\omega}_0R]^\frac{1}{2}(\frac{\tilde{\omega}_0}{2})[\int_0^1 M(\mu)\mu^2 d\mu]-
(\frac{\tilde{\omega}_0}{2})2U(T) \int_0^1\mu^3M(\mu) \log(1+\frac{1}{\mu}) d\mu
}
\end{equation}
\end{theorem}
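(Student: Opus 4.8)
The plan is to mirror the strategy used for Theorem~\ref{thrm: THEOREM 1}: take the defining relation \eqref{eq: M-function}, multiply it through by a suitable $\mu$-weight times $\tilde{\omega}_0/2$, integrate over $\mu\in[0,1]$, and then collapse the resulting double integral by symmetrizing in $\mu\leftrightarrow\mu'$. Here the correct weight is $(\tilde{\omega}_0/2)\mu^2$. Multiplying \eqref{eq: M-function} by this factor and integrating gives, on the right-hand side, (i) a pure constant $(\tilde{\omega}_0/2)\int_0^1\mu^2\,d\mu=\tilde{\omega}_0/6$ from the leading $1$ in \eqref{eq: M-function}, (ii) a thermal-emission term $(\tilde{\omega}_0/2)\,2U(T)\int_0^1\mu^3 M(\mu)\log(1+1/\mu)\,d\mu$ (the cubic-in-$\mu$ contribution flagged in the preamble of this section), and (iii) a double-integral term $(\tilde{\omega}_0/2)^2\int_0^1\int_0^1\frac{\mu^3}{\mu+\mu'}M(\mu)M(\mu')\,d\mu'\,d\mu$; the left-hand side is simply $(\tilde{\omega}_0/2)A_2$.

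The core of the argument is the reduction of that double integral. I would split the kernel as $\frac{\mu^3}{\mu+\mu'}=\mu^2-\frac{\mu^2\mu'}{\mu+\mu'}$. The $\mu^2$ piece factorizes immediately into $A_0 A_2$. For the remaining piece, interchanging the dummy variables $\mu$ and $\mu'$ and averaging the two equivalent forms replaces the kernel $\frac{\mu^2\mu'}{\mu+\mu'}$ by $\frac{1}{2}\mu\mu'$ (because $\mu^2\mu'+\mu{\mu'}^2=\mu\mu'(\mu+\mu')$), so that $\int_0^1\int_0^1\frac{\mu^2\mu'}{\mu+\mu'}M(\mu)M(\mu')\,d\mu'\,d\mu=\frac{1}{2}A_1^2$. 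Hence the double-integral term equals $(\tilde{\omega}_0/2)^2\bigl(A_0 A_2-\frac{1}{2}A_1^2\bigr)$, and collecting everything yields $(\tilde{\omega}_0/2)A_2\bigl[1-(\tilde{\omega}_0/2)A_0\bigr]=\tilde{\omega}_0/6+(\tilde{\omega}_0/2)\,2U(T)\int_0^1\mu^3 M(\mu)\log(1+1/\mu)\,d\mu-\frac{1}{2}\bigl((\tilde{\omega}_0/2)A_1\bigr)^2$.

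To finish I would invoke Theorem~\ref{thrm: THEOREM 1} in the form $1-(\tilde{\omega}_0/2)A_0=[1-\tilde{\omega}_0R]^{1/2}$, substitute it for the bracket on the left, and rearrange to isolate $\tilde{\omega}_0/6$; with $A_1=\int_0^1 M(\mu)\mu\,d\mu$ and $A_2=\int_0^1 M(\mu)\mu^2\,d\mu$ this is precisely \eqref{eq: theorem 2}. A slightly shorter alternative is to start from Corollary~\ref{cor: 1.2} instead, multiply it by $\mu^2 M(\mu)$, integrate, and symmetrize only the $\frac{\mu^2\mu'}{\mu+\mu'}$ kernel to $\frac{1}{2}A_1^2$; that route avoids $A_0$ entirely, since the $[1-\tilde{\omega}_0R]^{1/2}$ factor is already exposed in Corollary~\ref{cor: 1.2}.

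I do not anticipate a genuine obstacle; the one point requiring care is the prefactor bookkeeping, so that the quadratic contribution emerges cleanly as $\frac{1}{2}\bigl((\tilde{\omega}_0/2)A_1\bigr)^2$ and the kernel split is applied to the correct term. Reality and convergence of every integral used here are already guaranteed by Corollary~\ref{cor: 1.1}, so no separate estimate is needed.
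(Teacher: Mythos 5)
Your proposal is correct and follows essentially the same route as the paper: multiply Eq.~\eqref{eq: M-function} by $(\tilde{\omega}_0/2)\mu^2$, integrate, symmetrize the double integral in $\mu\leftrightarrow\mu'$ to obtain $A_0A_2-\tfrac{1}{2}A_1^2$, and then substitute Theorem~\ref{thrm: THEOREM 1}. Your splitting of the kernel as $\mu^2-\frac{\mu^2\mu'}{\mu+\mu'}$ is just a trivially equivalent rearrangement of the paper's use of the identity $\frac{\mu^3+\mu'^3}{\mu+\mu'}=\mu^2-\mu\mu'+\mu'^2$, so no substantive difference arises.
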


\begin{proof}
{We now extend the analysis to higher-order moments. In particular, we derive a relation involving the first, second and third moments of $M(\mu)$, which captures additional structure introduced by thermal emission.}

Multiplying eqn.\eqref{eq: M-function} by $\frac{\tilde{\omega}_0}{2}\mu^2$  {and integrating over $\mu \in [0,1]$, we again obtain three terms. }

\begin{equation}\label{eq: theorem 2 proving}
\begin{split}
\frac{\tilde{\omega}_0}{2} \int_0^1 M(\mu)\mu^2d\mu=& \frac{\tilde{\omega}_0}{2} \int_0^1\mu^2d\mu + 
2U(T)\frac{\tilde{\omega}_0}{2} \int_0^1\mu^3M(\mu) \log(1+\frac{1}{\mu}) d\mu 
+ (\frac{\tilde{\omega}_0}{2})^2 \int_0^1\mu^3 d\mu M(\mu)\int_0^1 \frac{M(\mu')}{\mu+\mu'}d\mu'
\end{split}
\end{equation}

{The most involved contribution arises from the double integral, which we simplify using the identity:} 
$$
\frac{\mu^3+\mu'^3}{\mu+\mu'} = \mu^2 -\mu\mu'+ \mu'^2
$$

{as follows. At first we interchange $\mu$ and $\mu'$ and take average of the third term on RHS of eqn. \eqref{eq: theorem 2 proving} to obtain,}
\begin{equation*}
    \begin{split}
        (\frac{\tilde{\omega}_0}{2})^2\frac{1}{2} \int_0^1 \int_0^1  \frac{\mu^3+\mu'^3}{\mu+\mu'} M(\mu)M(\mu')d\mu d\mu'
        &=
        (\frac{\tilde{\omega}_0}{2})^2\frac{1}{2} \int_0^1 \int_0^1  (\mu^2 - \mu\mu' + \mu'^2) M(\mu)M(\mu')d\mu d\mu'\\
        &=
        (\frac{\tilde{\omega}_0}{2})^2[\int_0^1 M(\mu)\mu^2 d\mu][\int_0^1 M(\mu)d\mu] - \frac{1}{2}[\frac{\tilde{\omega}_0}{2}\int_0^1 M(\mu) \mu d\mu]^2
    \end{split}
\end{equation*}

Now using Theorem \ref{thrm: THEOREM 1} we will get, 
\begin{equation*}
    \begin{split}
    &\frac{\tilde{\omega}_0}{2}[\int_0^1 M(\mu)\mu^2 d\mu][1-[1-\tilde{\omega}_0R]^\frac{1}{2}] - \frac{1}{2}[\frac{\tilde{\omega}_0}{2}\int_0^1 M(\mu) \mu d\mu]^2\\
    &= \frac{\tilde{\omega}_0}{2}[\int_0^1 M(\mu)\mu^2 d\mu] - \frac{\tilde{\omega}_0}{2}[\int_0^1 M(\mu)\mu^2 d\mu][1-\tilde{\omega}_0R]^\frac{1}{2}-\frac{1}{2}[\frac{\tilde{\omega}_0}{2}\int_0^1 M(\mu) \mu d\mu]^2
    \end{split}
\end{equation*}

Putting everything together in equation \eqref{eq: theorem 2 proving} we will get,
\begin{equation*}
         \frac{\tilde{\omega}_0}{6} + 
2U(T)\frac{\tilde{\omega}_0}{2} \int_0^1\mu^3M(\mu) \log(1+\frac{1}{\mu}) d\mu - \frac{\tilde{\omega}_0}{2}[\int_0^1 M(\mu)\mu^2 d\mu][1-\tilde{\omega}_0R]^\frac{1}{2}-\frac{1}{2}[\frac{\tilde{\omega}_0}{2}\int_0^1 M(\mu) \mu d\mu]^2=0
\end{equation*}

This expression is the same as of eq.\eqref{eq: theorem 2}. Hence proved \textit{Theorem} \ref{thrm: THEOREM 2}. {This relation shows that thermal emission introduces higher order (cubic) contributions in $\mu$, which are absent in the classical scattering-only case (e.g. H-function \citep{Chandrashekhar}).}
\end{proof}

{The above relations generalize the classical integral theorems of Chandrasekhar’s H-function \citep{Chandrashekhar} to the case of simultaneous scattering and thermal emission. The presence of U(T) modifies both the zeroth and higher-order moments, introducing additional nonlinear coupling between angular redistribution and intrinsic emission.}
\section{Estimation of the values of M-function:}\label{sec: value estimation}
In this section we will derive the values of $M(\mu,U, \tilde{\omega}_0)$ in a range of parameter values $\mu$, $\tilde{\omega}_0$ and U. From now on we will write the thermal emission co-efficient just as U for simplicity with the knowledge of how this function is defined. Hence the explicit expression of M-function can be written as,
\begin{equation}\label{eq: M-estimation}
M(\mu,U,\tilde{\omega}_0) = 1 + 2 U M(\mu,U,\tilde{\omega}_0) \mu \log(1+\frac{1}{\mu}) + \frac{\tilde{\omega}_0}{2}\mu M(\mu,U,\tilde{\omega}_0)\int_0^1 \frac{M(\mu',U,\tilde{\omega}_0)}{\mu+\mu'} d\mu'
\end{equation}

To progress further we will make it simple by classifying $M(\mu,U,\tilde{\omega_0})$ in three different regions as follows,
\begin{itemize}
    \item At first we will consider only emission and no scattering  ($U\neq0;\tilde{\omega_0}=0$). In such case the functional form $M(\mu,U,0)$ can be expressed analytically  as shown in eqn.\eqref{eq: M-function for only emission} and discussed in \ref{subsec: Only emission}.
    
    \item The next case considered is only scattering and no emission ($U=0;\tilde{\omega_0}\neq0$).  So the functional form $M(\mu,0,\tilde{\omega_0})$ will have a non-linear functional form eqn.\eqref{eq: M for only scattering} and can be solved only numerically. We explicitly discuss this in \ref{subsec: Only Scattering}.

    \item The final and most general case will be the case of both scattering and emission non zero. The functional form of $M(\mu,U,\tilde{\omega}_0)$ will be as eqn.\eqref{eq: M-estimation} and solved using the numerical methods as discussed in \ref{subsec: emission+scattering}
\end{itemize}

Note that eqn. \eqref{eq: M-estimation} consists of natural logarithm with base e as per the calculations presented in \citetalias{sengupta2021effects}. Hence all the following calculations are done in $log_e$, simply written as log to remain consistent with the literature, unless otherwise noted.
\subsection{Only emission:}\label{subsec: Only emission}
This is the simplest one among the three and can be solved analytically. For only emission case we can consider the scattering albedo,
$$\tilde{\omega}_0 = 0$$

Hence the equation of $M(\mu,U,0)$ function will be,

\begin{equation}\label{eq: M-function for only emission}
\begin{split}
&M(\mu,U,0) = 1 + 2 U M(\mu,U,0) \mu \log(1+\frac{1}{\mu})\\
\implies & M(\mu,U,0) = \frac{1}{1-2 U \mu \log(1+\frac{1}{\mu})}
\end{split}
\end{equation}

At this point we note that, eqn.\eqref{eq: M-function for only emission} has to give a positive value of $M(\mu)$ otherwise the emitting radiation will be negative which is unphysical. This puts the following condition on the denominator,
\begin{equation}\label{eq: emission upper limit}
1> 2 U\mu \log(1+\frac{1}{\mu})
\end{equation}

For the range of $\mu$ [0:1], we will get the maximum value of $\mu\log(1+\frac{1}{\mu})$ at $\mu = 1$. Hence the above condition will reduce into,
\begin{equation}\label{eq: emission upper limit value}
\begin{split}
1> 2 U \log(2)\\
\implies 1>1.386 U\\
\implies U<0.721
\end{split}
\end{equation}

 It means that for only emission case there is an upper limit of the blackbody flux to the irradiated flux ratio to get the real values of $M(\mu,U,0)$-function which can be written as , ${B(T)< 0.721} F$. In figure ~\ref{fig: M for omg=0} we have shown M as a function of $\mu$ for only emission case with varying U values upto 0.6 respecting the boundary set by the inequality condition eqn. \eqref{eq: emission upper limit value}. We will see a similar but more general condition for simultaneous emission and scattering as discussed in \ref{subsec: emission+scattering}

{It is worth noting that, in the absence of scattering ($\tilde{\omega}_0=0$), the emergent specific intensity within the present formulation can be written as eq.~(21) of \citetalias{sengupta2021effects}}
\begin{equation*}
I(0,\mu,\mu_0)=\frac{\mu_0}{\mu+\mu_0}\,B(T)\,M(\mu),
\end{equation*}

{Unlike the classical Milne pure--thermal--emission problem, the angular dependence does not vanish in this limit. Even though scattering of the incident flux is absent (i.e. $\tilde{\omega}_0 = 0$), the emergent intensity does not reduce to the isotropic result $I(0,\mu)=B(T)$. Instead, it retains an explicit dependence on the direction cosine $\mu$.}

{This behavior arises because, in the present diffuse--reflection framework \citetalias{sengupta2021effects}, thermal emission is not treated as a purely local source. Rather, it enters as an additive invariant contribution to the source function, normalized with respect to the incident flux. Consequently, setting $\tilde{\omega}_0=0$ removes scattering but does not eliminate the invariance constraint or the underlying reflection geometry, leading to a fundamentally different boundary--value problem than the classical formulation. We note that the convergence $I(0,\mu)\to B(T)$ in the grazing--angle limit $\mu\to0$ reflects the dominance of the local surface source function at arbitrarily large optical path lengths and does not imply equivalence with the Classical Milne formulation for general $\mu$. For an extensive discussion, we refer to \citetalias{sengupta2021effects}.}

\subsection{Only scattering:}\label{subsec: Only Scattering}
 In only scattering case we can consider the thermal emission contribution U=0 and thus the $M(\mu,0,\tilde{\omega}_0)$ will reduce into,
\begin{equation}\label{eq: M for only scattering}
M(\mu,0,\tilde{\omega}_0) = 1 + \frac{\tilde{\omega}_0}{2}\mu M(\mu,0,\tilde{\omega}_0)\int_0^1 \frac{M(\mu',0,\tilde{\omega}_0)}{\mu+\mu'} d\mu'
\end{equation}

This is nothing but Chandrasekhar's $H(\mu)$-function in isotropic scattering case as shown in eqn.\eqref{eq: H-function}. In fact in the no emission limit $U\to 0$, $M(\mu,0,\tilde{\omega}_0)\approx H(\mu,\tilde{\omega}_0)$ as discussed in section~\ref{sec: consistency}
. Clearly eqn.\eqref{eq: H-function} is a non-linear equation and its values can not be derived analytically. However the values of this function is well derived and tabulated in \cite{chandrasekharBreen1947radiative}.
 In this article these values are revisited 
 in case of no emission, for single scattering albedo values $\tilde{\omega}_0=0.1$ to 0.95. Only for perfect scattering case  $\tilde{\omega}_0=1.0;U=0$ case we need 3163 iterations for convergence (Not shown here), see section~\ref{sec: discussion}. {\color{SS}{In figure.\ref{fig: MvsOmega for fixed U}, the plot representing U=0 shows the variation of $M(\mu,0,\tilde{\omega}_0)$ with respect to $\tilde{\omega}_0$ for fixed $\mu$ s. Its variation with $\mu$ for fixed $\tilde{\omega}_0$ are also plotted by dashed lines in each plots of figure~\ref{fig: Comparison plots for M function}}}

\subsection{Simultaneous emission and scattering}\label{subsec: emission+scattering}

This is the most general case to be discussed in this article. The derived values exactly match with the previously derived values at the limiting conditions of only emission and only scattering. Here we solved the general equation of M-function as given in eqn. \eqref{eq: M-estimation}. Clearly this is a three variable non-linear equation which can only be solved numerically. The integral equation has been solved using the Gauss-Legendre quadrature method for 100 points, with convergence tolerance $10^{-8}$. All the numerical simulations has been done using the open source python packages \texttt{numpy} {\cite{2020NumPy-Array}} and \texttt{Scipy} {\cite{2020SciPy-NMeth}}. For numerical simulations we limit ourselves with maximum iterations 1000 which gives us quite satisfactory results.

While varying the $\tilde{\omega}_0$ we remind that the values of M will be real only within the regime constrained by Corollary\ref{cor: 1.1}. Taking account of that we calculated the M-values for $\tilde{\omega}_0$ = 0.1, 0.2,0.3,0.4,0.5,0.6,0.7,0.8,0.9,0.95 
respectively along with the maximum possible U (named $U_{max}$) for the convergence. It can be seen from fig.\ref{fig: Comparison plots for M function} that, $U_{max}$ for real $M(\mu)$, decreases with increasing $\tilde{\omega}_0$ satisfying the condition of corollary \ref{cor: 1.1}. The values of $M(\mu,U,\tilde{\omega}_0)$ are plotted in figure \ref{fig: Comparison plots for M function}

\begin{figure}[htbp]
    \centering
    \subfigure[$\tilde{\omega}_0=0.0;U_{max}=0.6$\label{fig: M for omg=0}]{\includegraphics[width=0.3\textwidth]{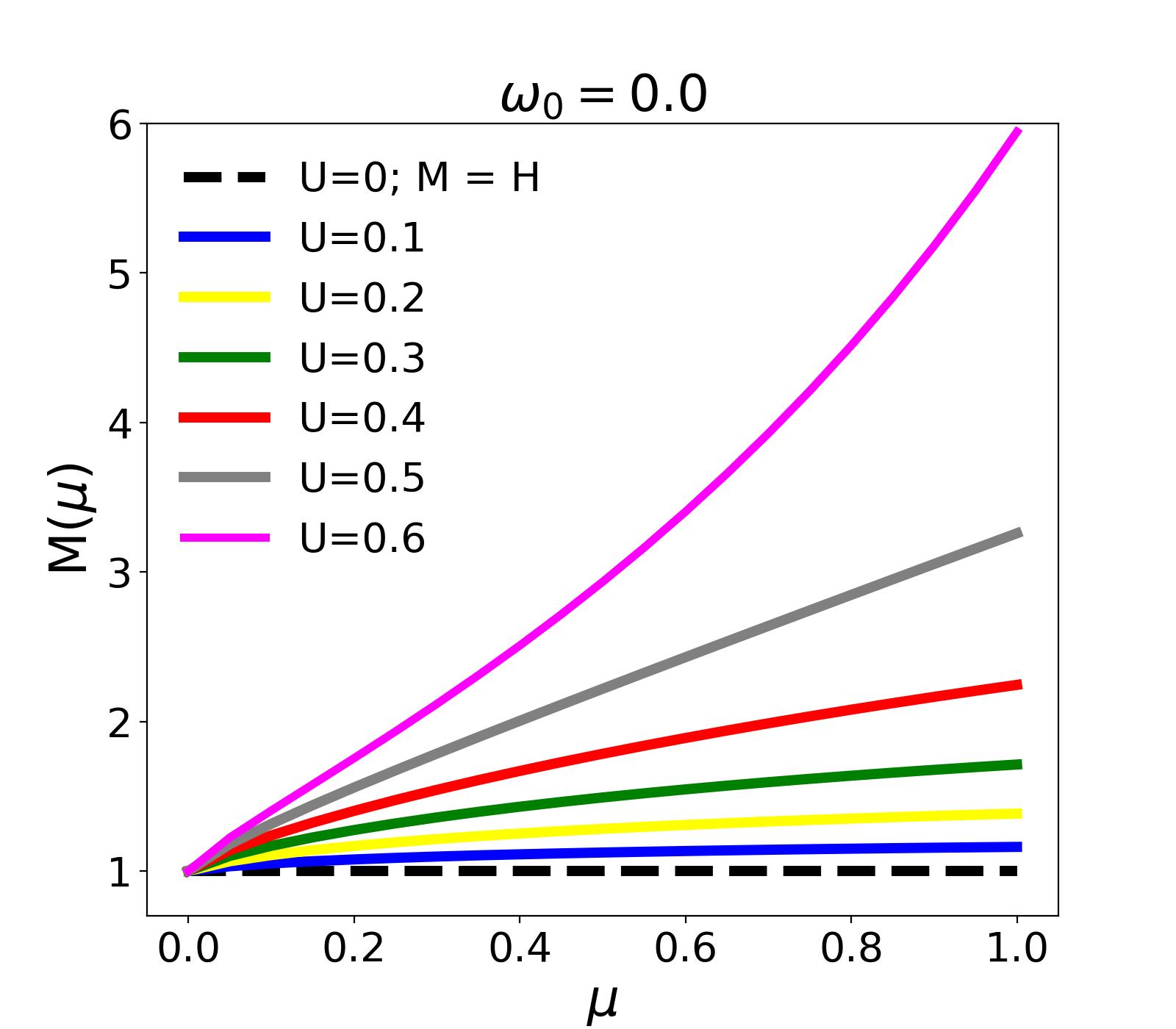}}
    \subfigure[$\tilde{\omega}_0=0.1;U_{max}=0.5$]{\includegraphics[width=0.3\textwidth]{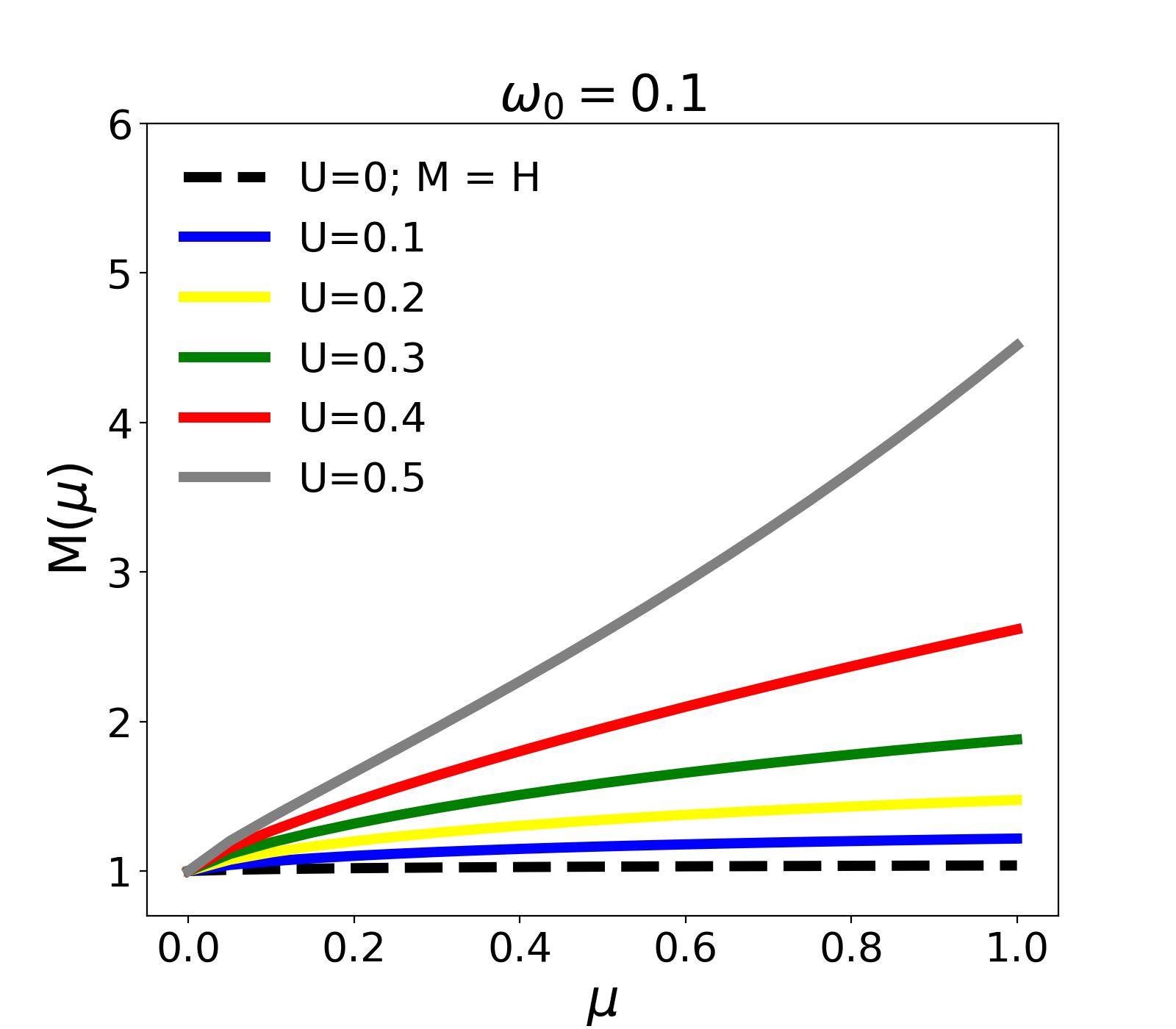}}
    \subfigure[$\tilde{\omega}_0=0.2;U_{max}=0.45$]{\includegraphics[width=0.3\textwidth]{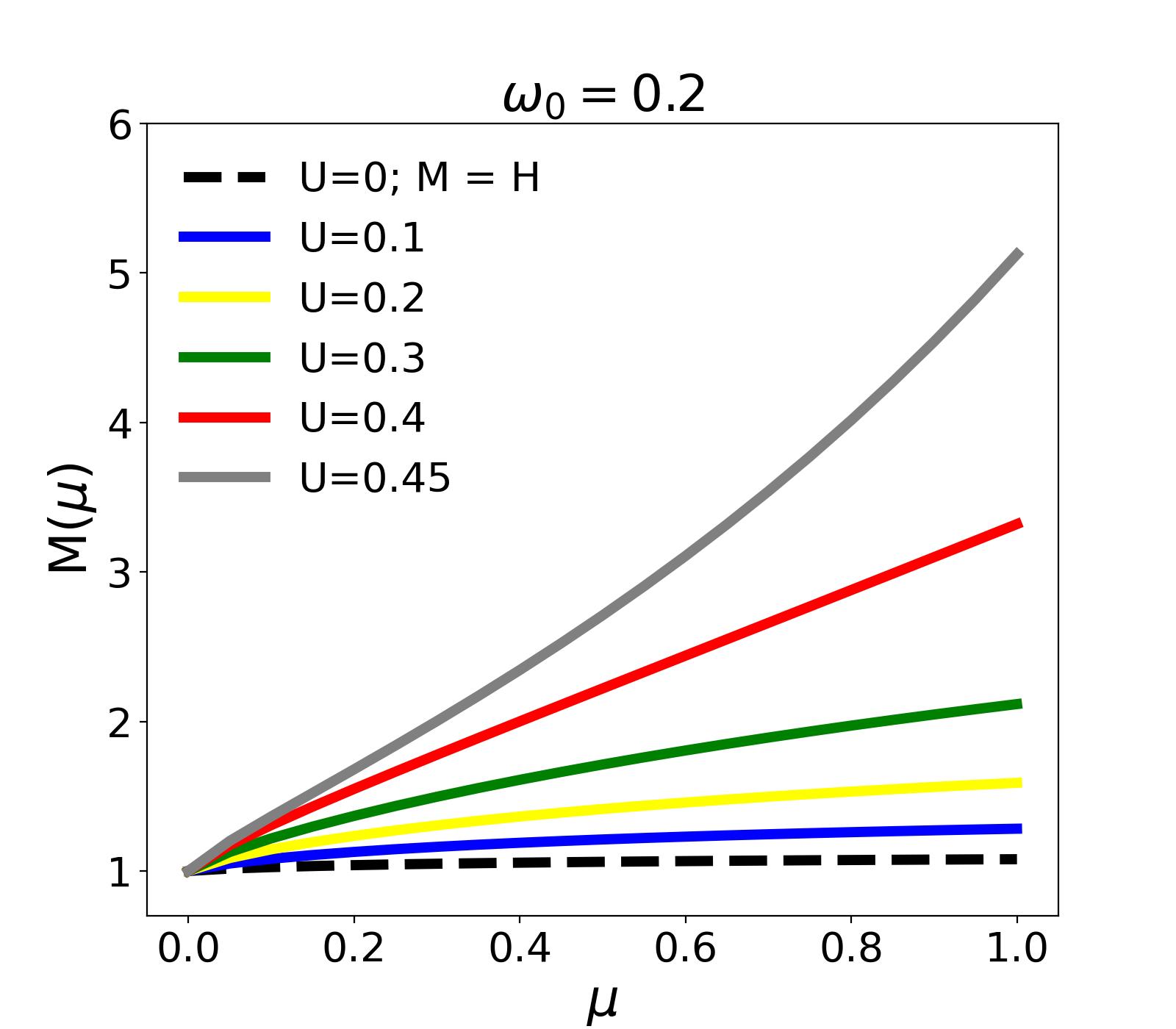}}
    
    \subfigure[$\tilde{\omega}_0=0.3;U_{max}=0.35$]{\includegraphics[width=0.3\textwidth]{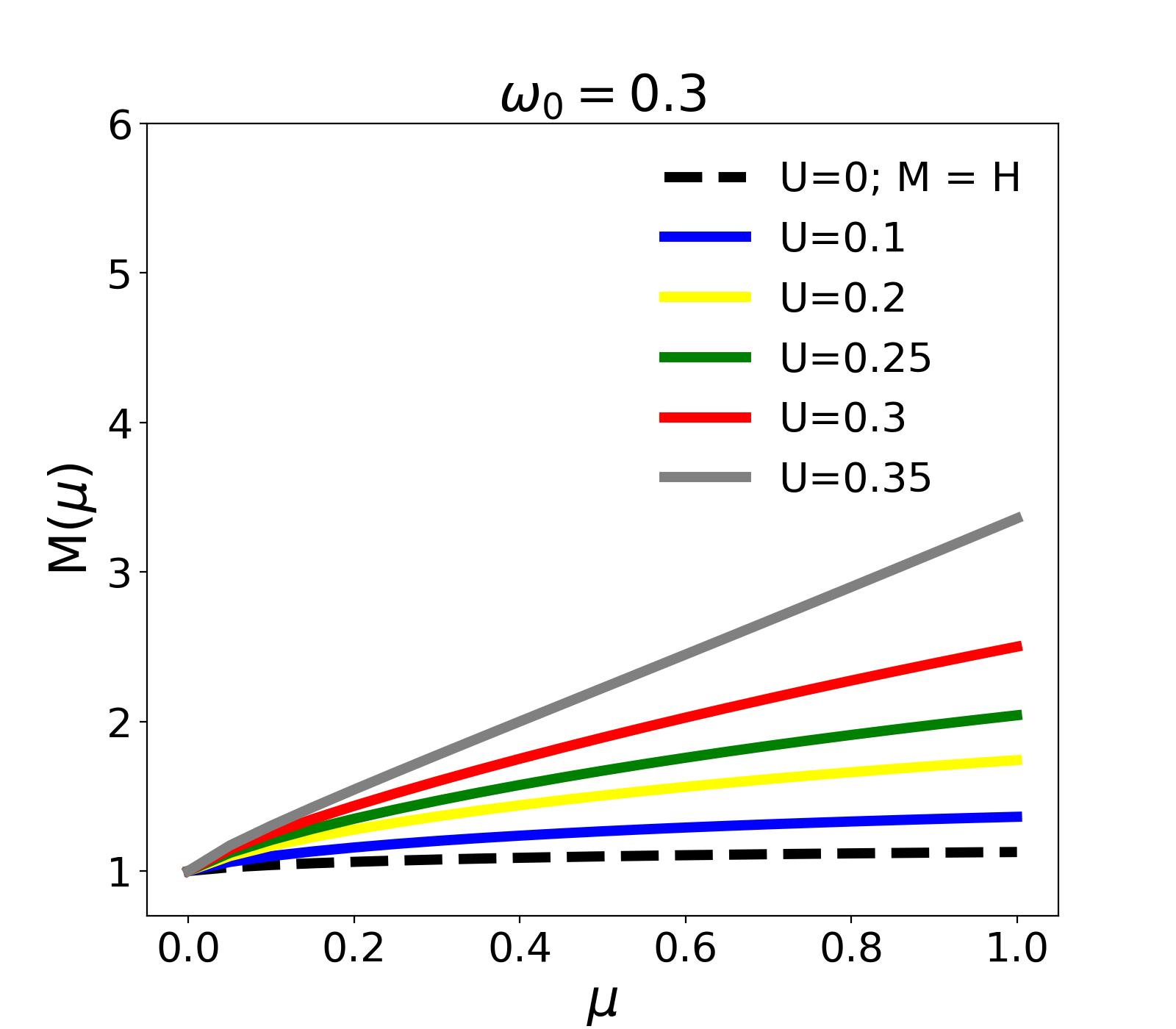}}
    \subfigure[$\tilde{\omega}_0=0.4;U_{max}=0.3$]{\includegraphics[width=0.3\textwidth]{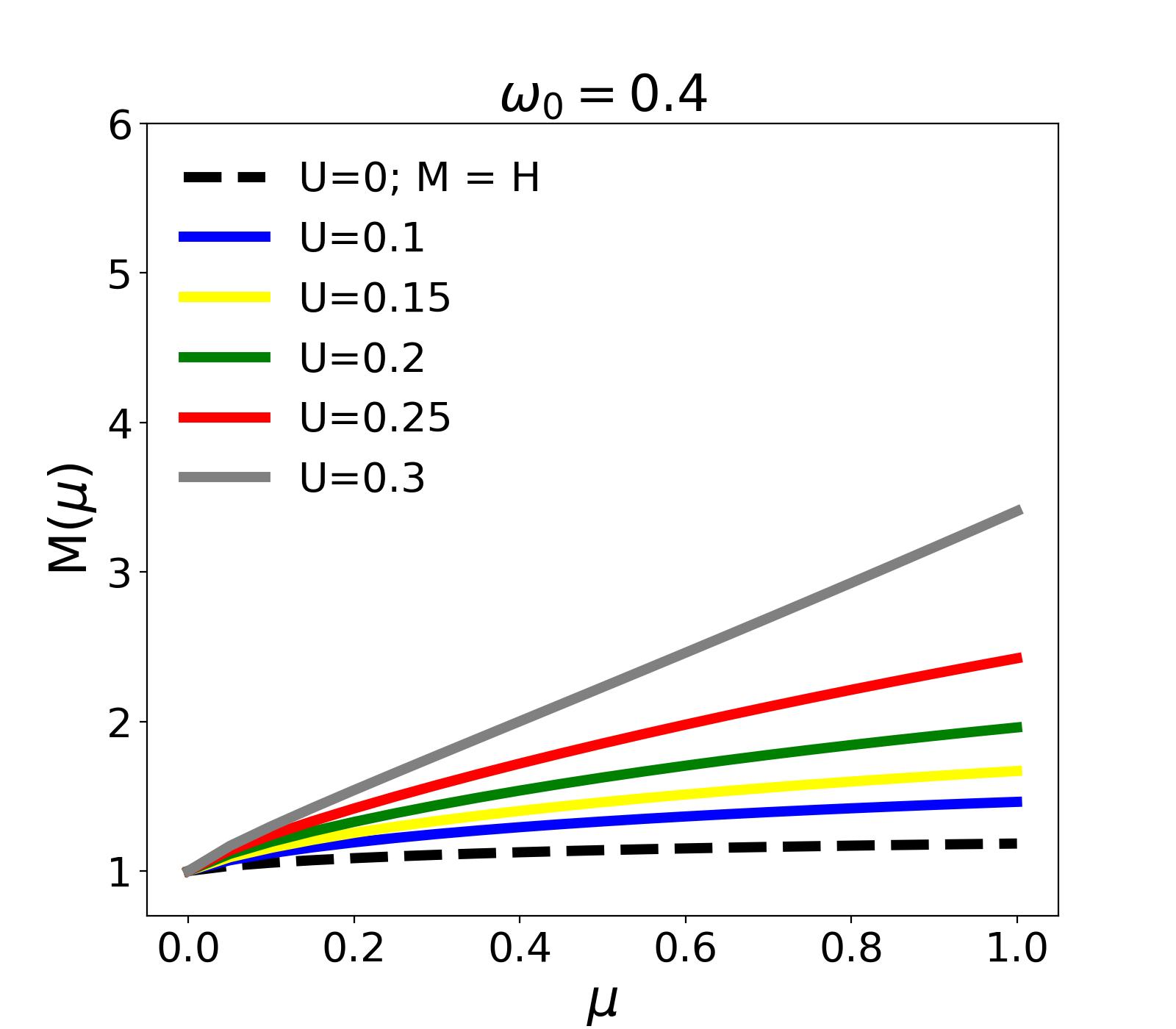}}
    \subfigure[$\tilde{\omega}_0=0.5;U_{max}=0.25$]{\includegraphics[width=0.3\textwidth]{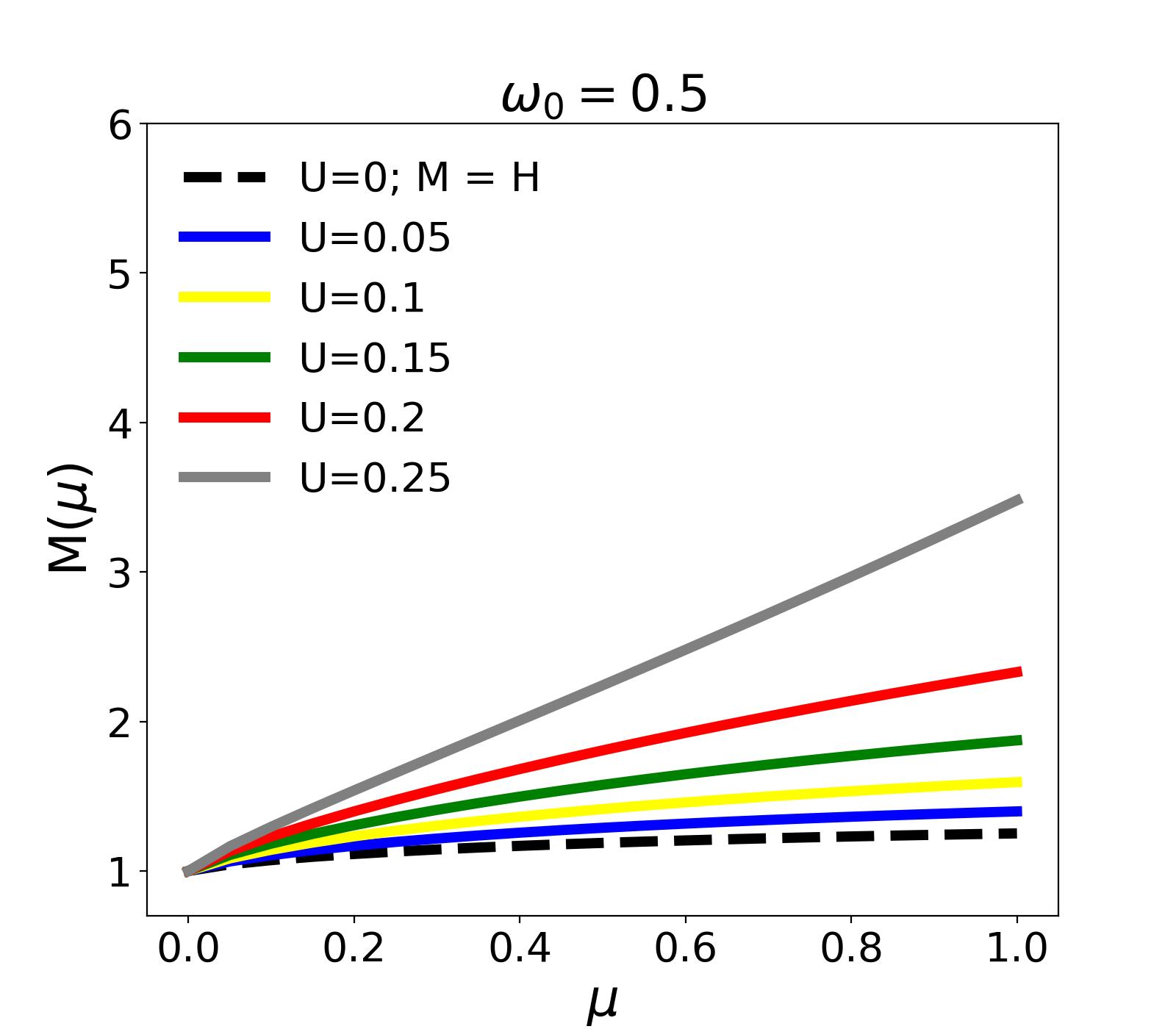}}
    
    \subfigure[$\tilde{\omega}_0=0.6;U_{max}=0.2$]{\includegraphics[width=0.3\textwidth]{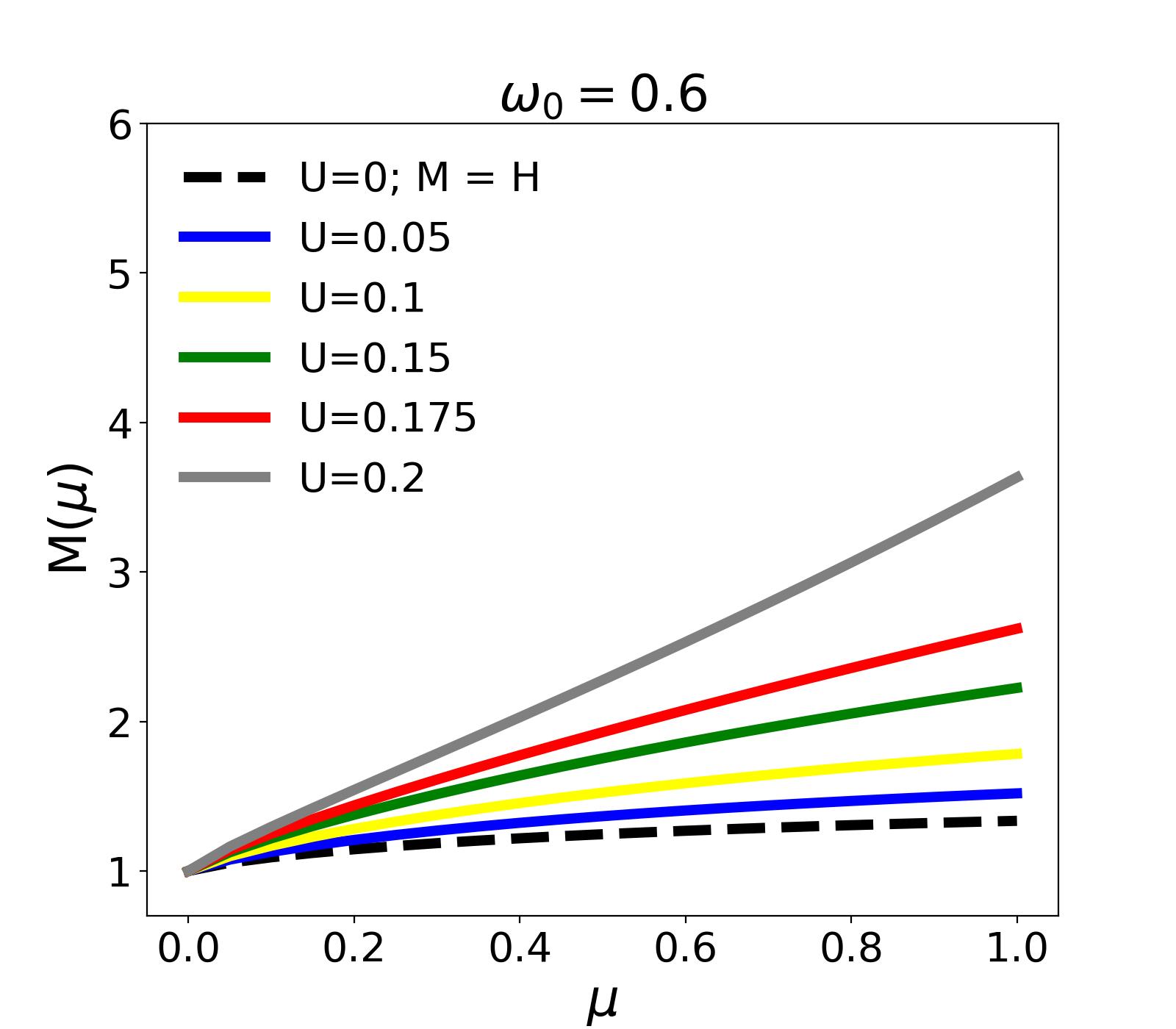}}
    \subfigure[$\tilde{\omega}_0=0.7;U_{max}=0.1$]{\includegraphics[width=0.3\textwidth]{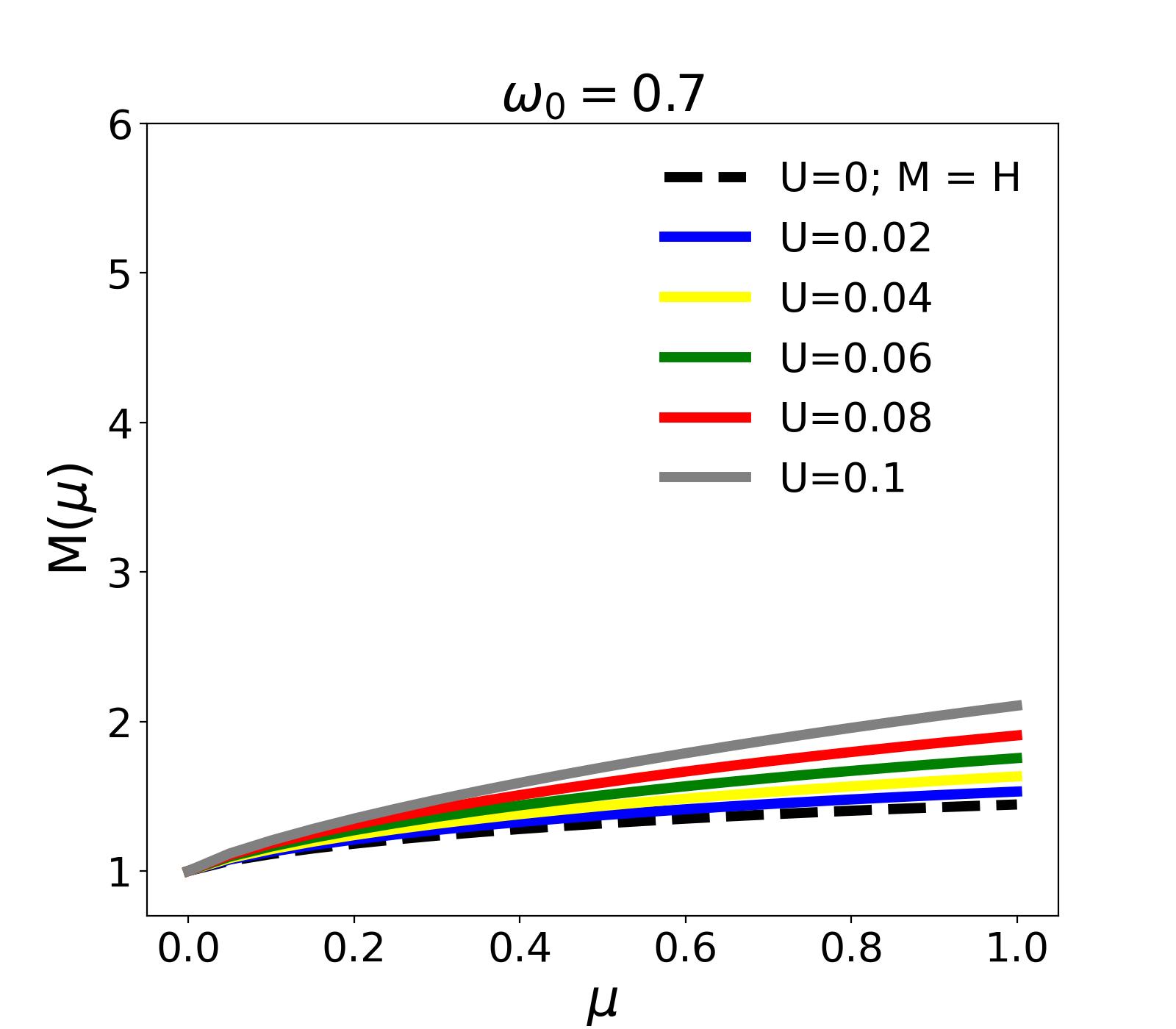}}
    \subfigure[$\tilde{\omega}_0=0.8;U_{max}=0.09$]{\includegraphics[width=0.3\textwidth]{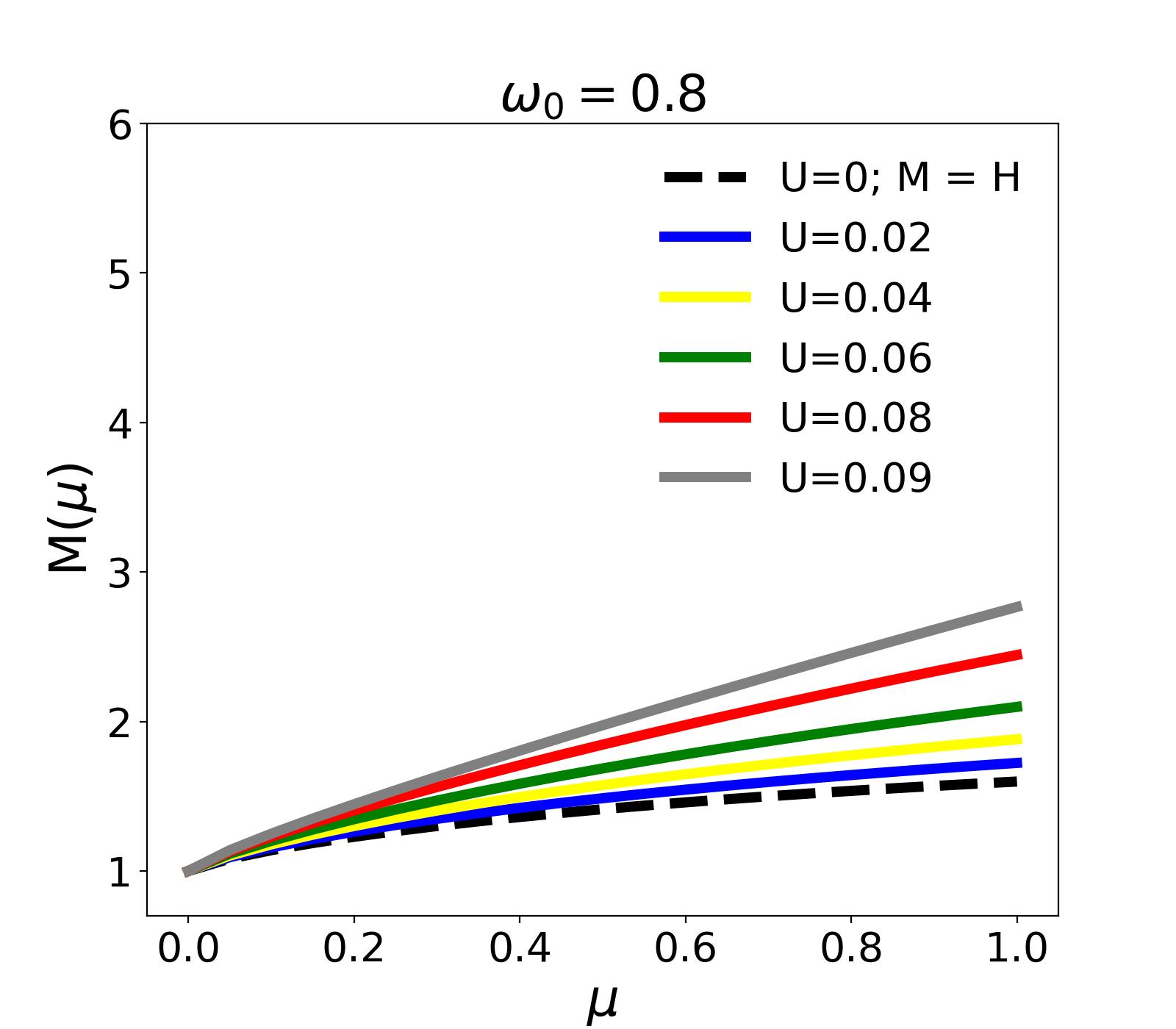}}
    
    \subfigure[$\tilde{\omega}_0=0.9;U_{max}=0.04$]{\includegraphics[width=0.3\textwidth]{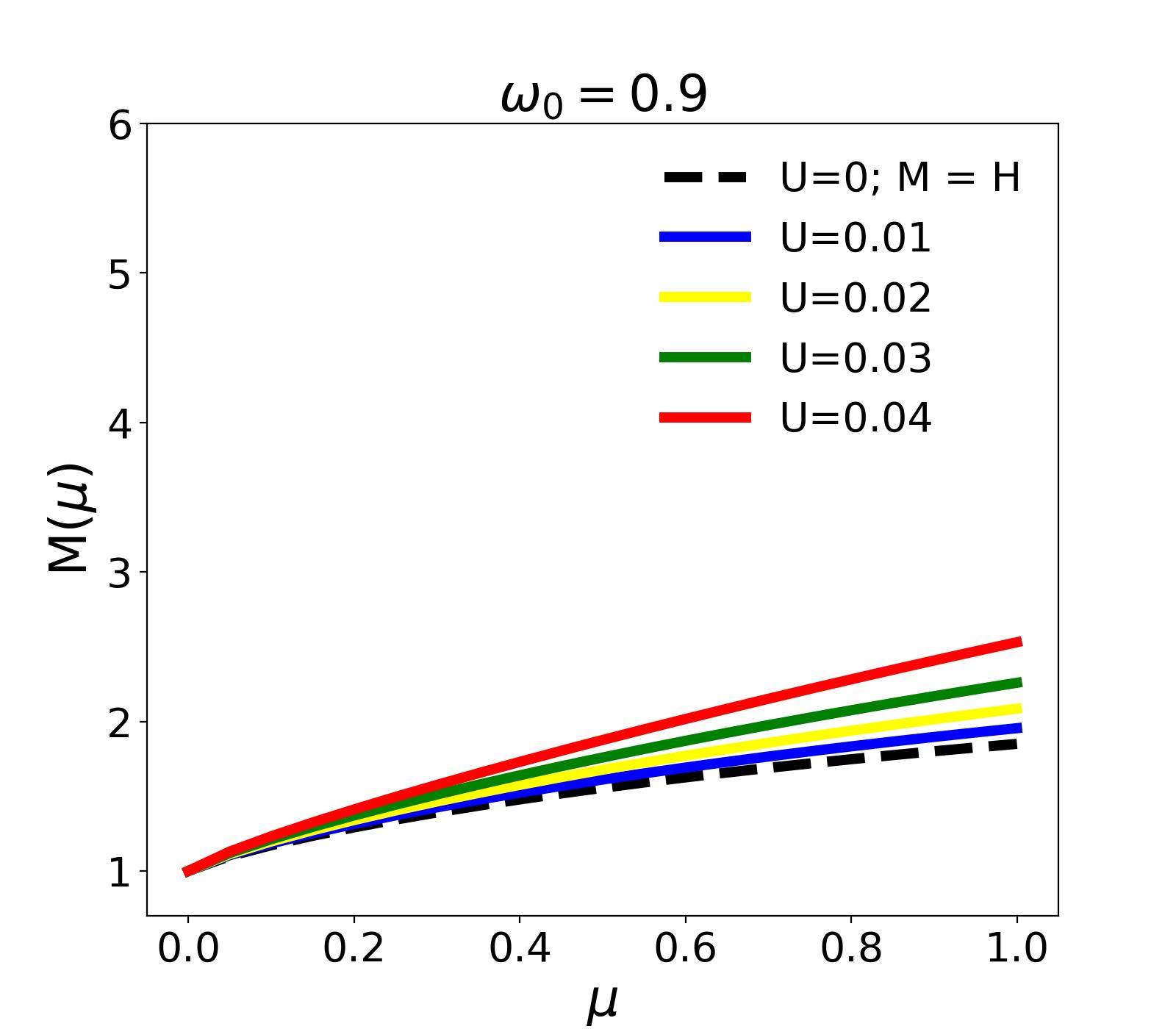}}
    \subfigure[$\tilde{\omega}_0=0.95;U_{max}=0.02$]{\includegraphics[width=0.3\textwidth]{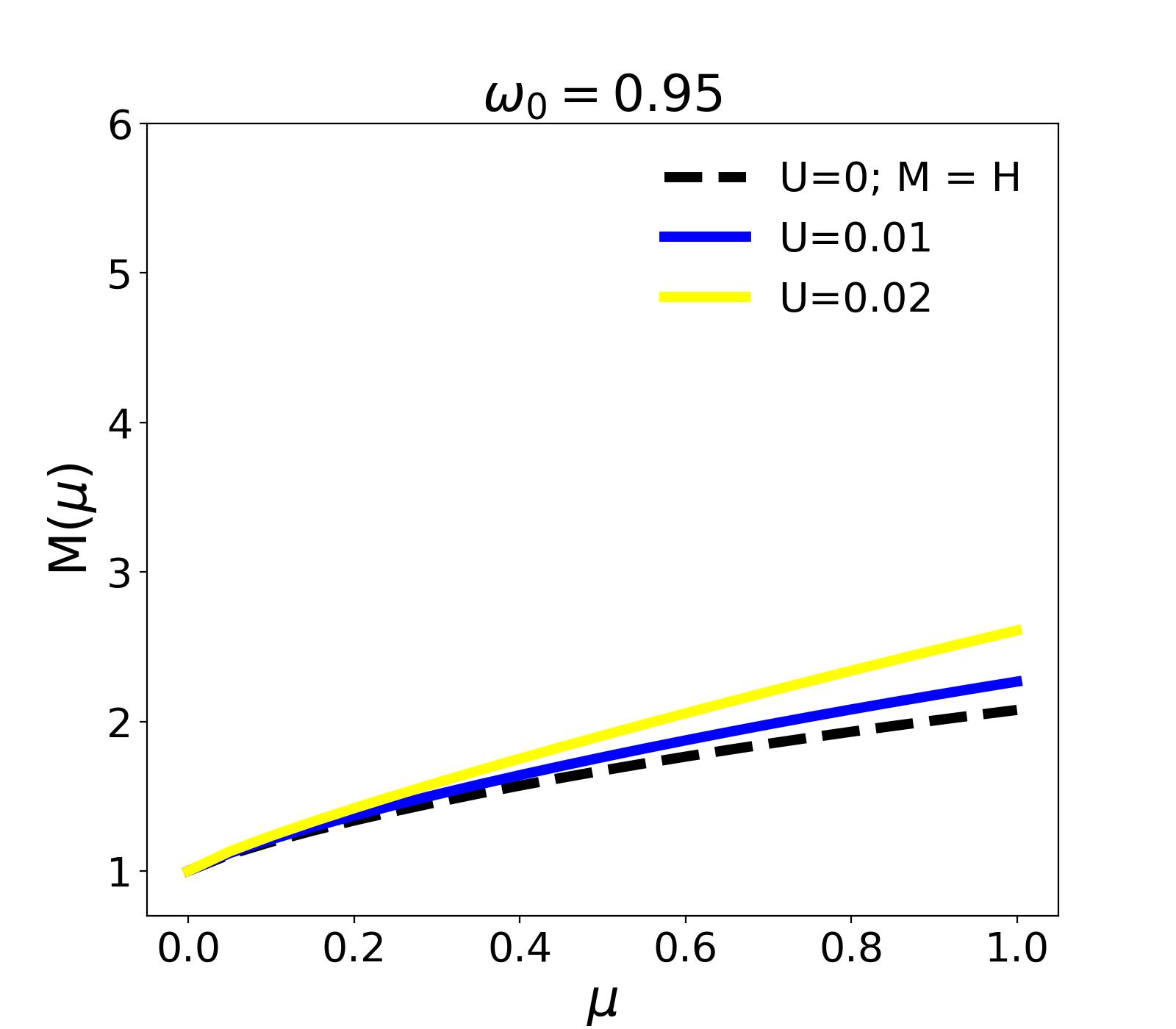}}

    \caption{Here we have plotted the function $M(\mu)$ with respect to $\mu$ for a set of single scattering albedo $\tilde{\omega}_0$ ranging from 0.0-0.95. Only for the plot~\ref{fig: M for omg=0} we have used the analytic expression given in eqn.\eqref{eq: M-function for only emission} and for the remaining plots numerical values are used. 
    In each plot, we vary the thermal emission co-efficient U within the limit of convergence as defined by the corollary \eqref{cor: 1.1}. The dashed line in each plot represents the U=0, the no thermal emission case where $M(\mu)=H(\mu)$ as discussed in sec.~\ref{sec: consistency}. It is clear from these figures that with the increase of U and $\mu$ the difference between $M(\mu)$ and $H(\mu)$ increases (see text).}
    \label{fig: Comparison plots for M function}
\end{figure}

\begin{figure}
    \centering
    {\includegraphics[width=0.3\linewidth]{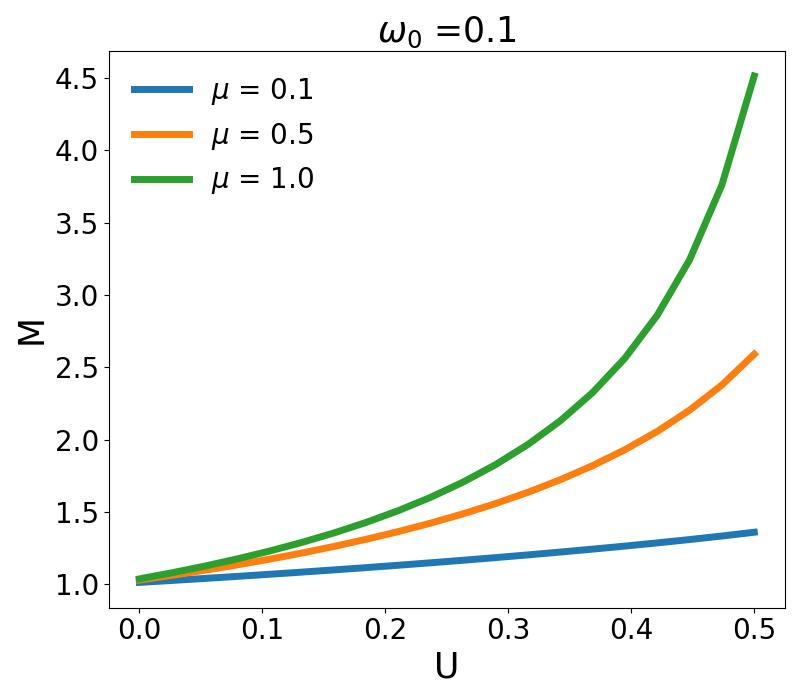}
    \includegraphics[width=0.3\linewidth]{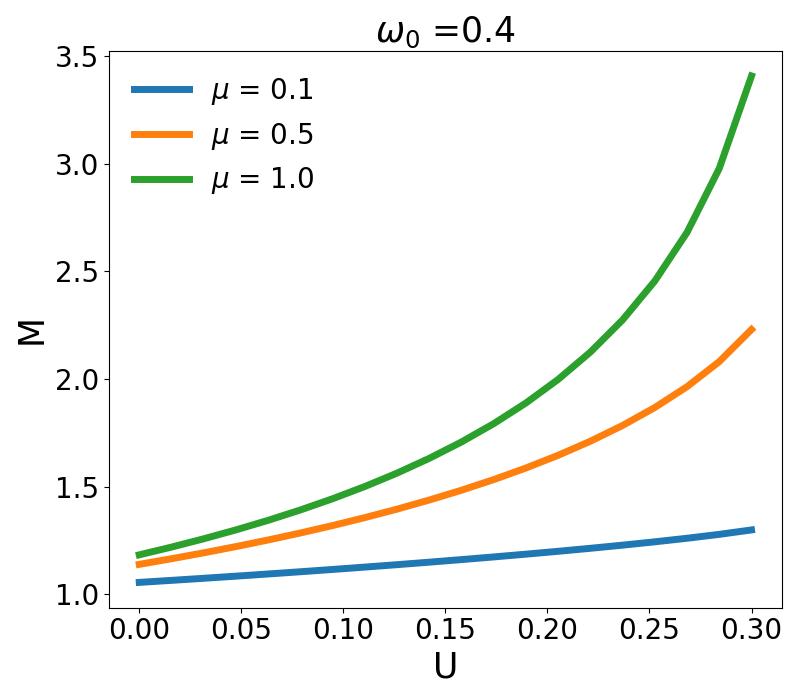}
    \includegraphics[width=0.3\linewidth]{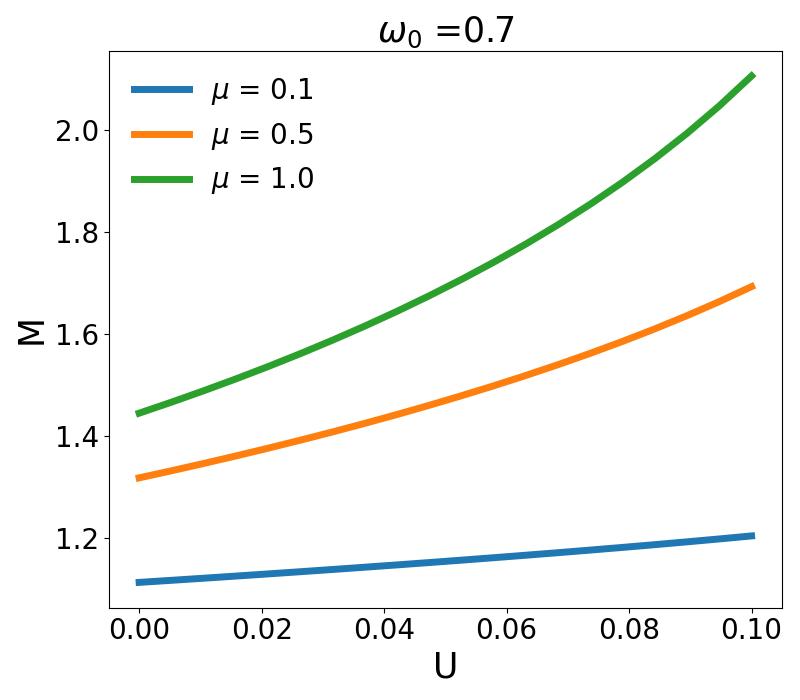}
    }
    \caption{\color{SS}{Here we have shown the variation of M with respect to thermal emission co-efficient U. These figures are  for three different single scattering albedo values  $\tilde{\omega}_0$=0.1 (low scattering), 0.4(moderate scattering), 0.7(high scattering). In each figure we plotted three different direction cosines 0.0 (tangential to the plane), 0.5 and 1.0 (perpendicular to the plane). It is clear that with increasing U the M-values also increases for $\mu\neq 0$. Also this increase became faster with larger $\mu$.}}
    \label{fig: MvsU for fixed omg}
\end{figure}

\begin{figure}
    \centering
    {\includegraphics[width=0.3\linewidth]{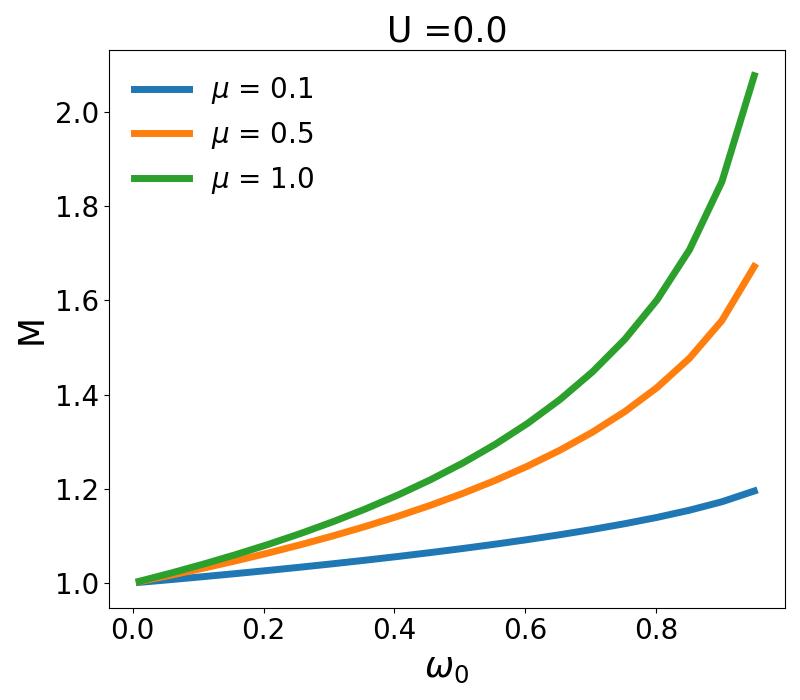}
    \includegraphics[width=0.3\linewidth]{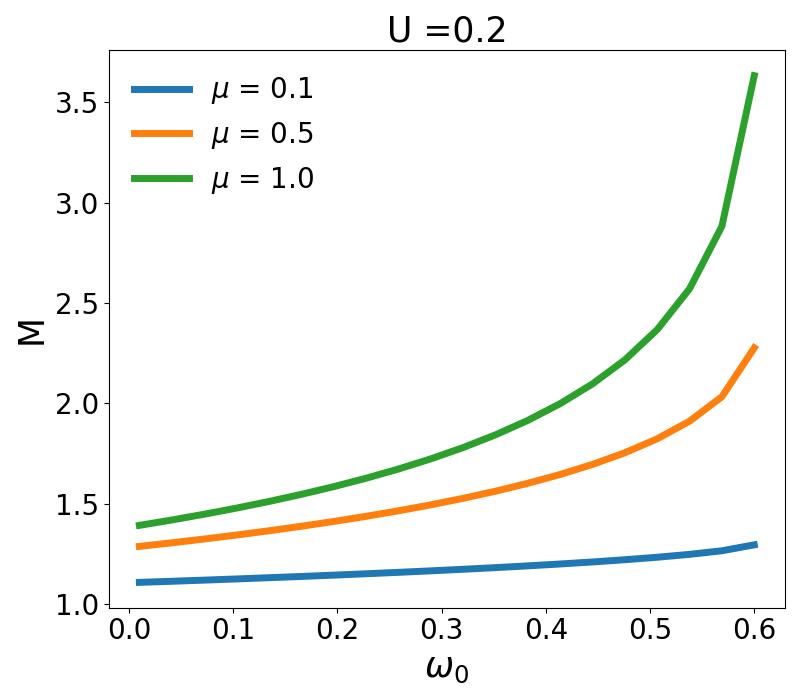}
    \includegraphics[width=0.3\linewidth]{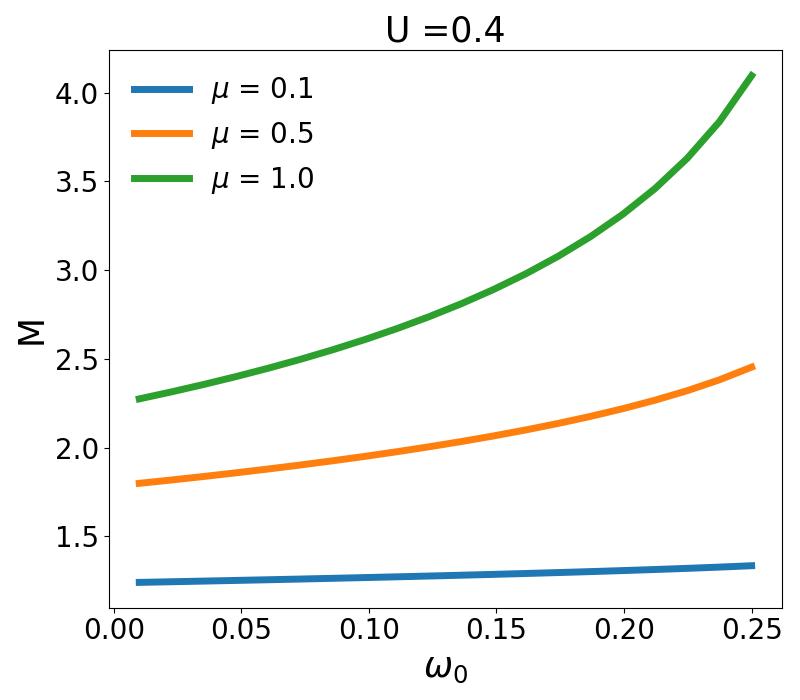}
    }
    \caption{\textcolor{MS}{Here the variation of M is studied  with respect to $\tilde{\omega}_0$. These figures are for three different U values 0, 0.2, 0.4 respectively and the variation of $\mu$ is same as in fig. \ref{fig: MvsU for fixed omg}.}\color{SS} It should be noted that left most figure actually represents the H function due to U=0 (see text)}
    \label{fig: MvsOmega for fixed U}
\end{figure}

\section{Consistency of our results}\label{sec: consistency}
We have derived the theorems and values of the $M(\mu)$ function which are more general than Chandrasekhar's $H(\mu)$-function in case of isotropic scattering, where we have added thermal emission along with the scattering. Now we show that at the low thermal emission limit i.e. $U(T)\to 0$ all of our results will match with that of the $H(\mu)$ function.

In case of {isotropic} diffusion scattering, the well known Chandrasekhar's H-function can be written \cite{Chandrashekhar} as,
\begin{equation}\label{eq: H-function}
    H(\mu) = 1+ \frac{\tilde{\omega}_0}{2}\mu H(\mu)\int_0^1 \frac{H(\mu')}{\mu + \mu'} d\mu'
\end{equation}

Now the integral theorems satisfied by $H(\mu)$ function {can be found in }\citep{Chandrashekhar,chandrasekhar1947radiative}. Also the moments of H-function are defined in \cite{Chandrashekhar} as 
$\alpha_n = \int_0^1 \mu^n H(\mu) d\mu$ {,which takes the functional form of $\alpha_0 =\frac{2}{\tilde{\omega}_0}[ 1 - \sqrt{1-\tilde{\omega}_0}]$ in case of zeroth order moment \cite{anli2021alternative}.}

In section~\ref{sec: Theorems of M-function}, we derived the integral theorems of M-function which are analogous to Chandrasekhar's H-function in case of simultaneous thermal emission and diffusely scattering atmosphere. To be consistent with Chandrasekhar's results, the theorems for M-function should match with the integral theorems of H-functions in isotropic scattering case for no thermal emission limit. 

At the limit of low thermal emission, i.e. $U\to 0; R\to 1$ {see} eqn.\eqref{eq: R-formula}. In such case the integral theorems of M-functions will reduce into,
\begin{equation*}
        \frac{\tilde{\omega_0}}{2}\int_0^1 M(\mu) d\mu = 1-[1-\tilde{\omega}_0]^\frac{1}{2}
\end{equation*}
and
\begin{equation*}
    \frac{\tilde{\omega}_0}{6}= \frac{1}{2}[\frac{\tilde{\omega}_0}{2}\int_0^1 M(\mu) \mu d\mu]^2 + [1-\tilde{\omega}_0]^\frac{1}{2}(\frac{\tilde{\omega}_0}{2})[\int_0^1 M(\mu)\mu^2 d\mu]
\end{equation*}

These matches exactly with the theorems of H-function for isotropic {scattering} case \cite{Chandrashekhar}. Also at low thermal emission the moment of $M(\mu)$ function {as given in \ref{thrm: THEOREM 1}} shows that $A_0 \to \alpha_0$ putting the limit $R\to 1$

Finally we discuss the consistency of the values of the $M(\mu)$-function as shown in fig. \ref{fig: Comparison plots for M function}.
The methods of derivation of the values has been discussed in section \ref{sec: value estimation}. In each plot the line representing U = 0.0 provides the values of $M(\mu,U=0,\tilde{\omega}_0)$ which are exactly the same as the values of $H(\mu)$ provided by \cite{chandrasekharBreen1947radiative} for the isotropic scattering case with corresponding values of $\tilde{\omega}_0$. {\color{SS} Also the left panel of figure. \ref{fig: MvsOmega for fixed U} exactly reproduces the $H(\mu,\tilde{\omega}_0)$ function in the no thermal emission limit.} Hence all of our results are fully consistent with the $H(\mu)$ function for the no thermal emission (i.e. $U\to 0$) limit. 
\section{Discussion}\label{sec: discussion}

In this work, we have derived and established a set of theorems governing the behavior of the $M(\mu, U, \tilde{\omega}_0)$ function, which incorporates both isotropic scattering and thermal emission in the context of semi-infinite diffuse reflection problem \citep{sengupta2021effects}. This marks a significant step forward by integrating the thermal emission contribution, represented by the coefficient $U(T) = \frac{B(T)}{F_{irr}}$, where $B(T)$ is the thermal emission and $F_{irr}$ is the irradiated flux, into the classical scattering framework introduced by \cite{Chandrashekhar}. The inclusion of this term modifies the traditional radiative transfer equations and introduces a new layer of complexity, which is particularly relevant in astrophysical scenarios such as exoplanets and Hot Jupiters, where internal heating significantly impacts the emergent radiation.

\subsection{Theorems and their Physical Significance}\label{subsec: Discuss theorems}

{The theorems governing the $M(\mu,U,\tilde{\omega}_0)$ function are consistent with earlier radiative transfer studies \cite{sengupta2021effects, sengupta2022atmospheric} and provide a natural generalization of Chandrasekhar’s classical diffuse scattering theory \cite{Chandrashekhar}. The key novelty of the present work lies in the explicit incorporation of thermal emission through the invariant parameter $U(T)$.}

{It is important to note that the single-scattering albedo $\tilde{\omega}_0$ and the thermal emission parameter $U(T)$ describe fundamentally different physical processes. The parameter $\tilde{\omega}_0$ governs the probability that an interaction results in scattering rather than absorption of incident radiation, whereas $U(T)$ characterizes the generation of additional photons through thermal emission. These processes do not compete within a single interaction; instead, thermal emission adds to the radiation field, which is subsequently redistributed by scattering.}

{As shown in Theorem~\ref{thrm: THEOREM 1}, the integral constraint on $M(\mu)$ reduces to the classical isotropic scattering case when thermal emission is absent ($R \to 1$), where the $\mu$-integration is fully determined by the single-scattering albedo $\tilde{\omega}_0$. This establishes the correct classical limit and validates the consistency of the formulation with earlier results for semi-infinite atmospheres (see Section~\ref{sec: consistency}).}

{When thermal emission is present ($R>1$), the integral equation for $M(\mu)$ becomes nonlinear, and the $\mu$-integration depends jointly on $\tilde{\omega}_0$ and $U(T)$. This reflects the coupled influence of scattering and internal emission on the angular redistribution of radiation, highlighting the nontrivial role of thermal emission in the diffuse reflection problem.}

\subsection{Corollaries and Their Implications}\label{subsec: Discuss Corol}

{The corollaries further constrain the behavior of the $M(\mu)$ function under different physical conditions. In particular, Corollary~\ref{cor: 1.1} defines the admissible range of $\tilde{\omega}_0$ and $U(T)$ for which physically meaningful solutions exist. For perfect scattering ($\tilde{\omega}_0=1$), thermal emission is excluded, and the problem reduces to Chandrasekhar’s classical diffuse scattering case, with $M(\mu)$ identical to $H(\mu)$. This provides an important validation of the generalized formulation.}

{Corollary~\ref{cor: 1.2} guarantees that $M(\mu)$ remains nonzero for all $\mu\in[0,1]$. In the limiting case of no scattering and no emission ($U=0$, $\tilde{\omega}_0=0$), one obtains $M(\mu,0,0)=1$, corresponding to pure transmission of the incident stellar radiation, consistent with earlier results (see Fig.~2 of \cite{sengupta2022atmospheric}). This ensures that the solution remains physically well behaved across the full parameter space.}

{Finally, Theorem~\ref{thrm: THEOREM 1} shows that the zeroth moment of $M(\mu)$ remains finite for all $\tilde{\omega}_0\le1$, ensuring a finite total reflected intensity and providing a robust basis for computing observable quantities.}

\subsection{Values}\label{subsec: Discuss Values}

{We now discuss the values of $M(\mu,U,\tilde{\omega}_0)$ obtained from Eq.~\eqref{eq: M-function} using the governing theorems. Three cases are considered: (i) only thermal emission, (ii) only scattering, and (iii) simultaneous scattering and thermal emission.}

\subsubsection{Case 1: Only Thermal Emission}\label{subsubsec: Discuss emission}

{For pure thermal emission ($\tilde{\omega}_0=0$, $U(T)\neq0$), $M(\mu)$ admits an analytic form (Eq.~\eqref{eq: M-function for only emission}). Real solutions require an upper bound on $U(T)$, given by Eq.~\eqref{eq: emission upper limit value}. Although this bound formally allows $U\lesssim0.721$, $M(\mu)$ grows rapidly near this limit; therefore, we restrict our analysis to $U\le0.6$.}

{Figure~\ref{fig: M for omg=0} shows that $M(\mu,U,0)$ increases monotonically with $\mu$, indicating stronger emission along the surface normal than at grazing angles. For fixed $\mu$, $M(\mu)$ also increases with $U(T)$, consistent with enhanced Planck emission at higher temperatures. As $U(T)$ increases, $M(\mu)$ departs increasingly from the scattering-only solution $H(\mu)$, in agreement with \citetalias{sengupta2021effects}.}

\subsubsection{Case 2: Only Scattering}\label{subsubsec: Discuss scattering}

{For pure scattering ($U=0$, $\tilde{\omega}_0\neq0$), $M(\mu)$ reduces exactly to Chandrasekhar’s $H(\mu)$ function, as discussed in Section~\ref{sec: consistency}. Values of $H(\mu,\tilde{\omega}_0)$ from \cite{chandrasekharBreen1947radiative} are shown in Figure~\ref{fig: Comparison plots for M function} (black dashed curves). As expected, increasing $\tilde{\omega}_0$ enhances the reflected intensity for all $\mu$. Corollary~\ref{cor: 1.1} further confirms that for $\tilde{\omega}_0=1$, embeded thermal emission is excluded and the results converges to \cite{chandrasekharBreen1947radiative}.}

\subsubsection{Case 3: Simultaneous Scattering and Thermal Emission}\label{subsubsec: Discuss Emission+Scattering}

{In the general case ($U\neq0$, $\tilde{\omega}_0\neq0$), $M(\mu)$ exhibits a nonlinear dependence on both parameters. For fixed $\tilde{\omega}_0$, $M(\mu)$ increases with $U$ (Fig.~\ref{fig: MvsU for fixed omg}), while for fixed $U$, it increases with $\tilde{\omega}_0$ (Fig.~\ref{fig: MvsOmega for fixed U}). This reflects the enhanced scattering of thermally emitted Planck photons at higher albedo, leading to an amplified reflected intensity.}

{At grazing angles ($\mu\to0$), $M(\mu)$ converges to unity regardless of $U$ or $\tilde{\omega}_0$, indicating that emission and scattering effects become indistinguishable in this limit. The strongest deviations from $H(\mu)$ occur near normal incidence ($\mu=1$), where both thermal emission and scattering are most effective.}

{Overall, $M(\mu,U,\tilde{\omega}_0)$ increases monotonically with $\mu$, $U$, and $\tilde{\omega}_0$ (Fig.~\ref{fig: Comparison plots for M function}), demonstrating the coupled role of scattering and thermal emission in shaping the diffusely reflected specific intensity.}

{We therefore conclude that $M(\mu,U,\tilde{\omega}_0)$ represents a natural generalization of Chandrasekhar’s $H(\mu)$ function. It retains the classical behavior in the appropriate limits (Section~\ref{sec: consistency}) while extending the theory to atmospheres with invariantly embedded thermal emission.}

\subsection{Limitations and Future work}\label{subsec: limitation}
This work is the first attempt to quantitatively establish the thermal emission contribution for the semi-infinite diffuse reflection problem for the isotropic scattering case.
While this work presents a comprehensive analysis of the $M(\mu, U, \tilde{\omega}_0)$ function, there are a few limitations that should be highlighted, which can be addressed in future research.

Firstly, we have treated the thermal emission coefficient $U$ as a variable parameter rather than explicitly incorporating temperature as a direct variable. {Although this approach is effective for many practical applications, particularly when temperature gradients are weak, a more detailed treatment with an explicit temperature profile would be required for systems with complex thermal structures where temperature-dependent effects strongly influence radiative transfer.}

Secondly, in the conservative case of perfect scattering, where $\tilde{\omega}_0 = 1$, numerical solutions for the $M(\mu, U, \tilde{\omega}_0)$ function were achieved only for U=0 case within the parameter range considered in this study. In this limit, thermal emission becomes impossible, and the system reduces to the purely scattering case, which is well understood in the literature. However, further work could explore methods for addressing this specific case and ensure smooth transitions between scattering and thermal emission regimes, enhancing the flexibility of the model for various astrophysical environments.

Despite these limitations, this work provides a solid foundation for the calculation of the $M(\mu, U, \tilde{\omega}_0)$ function, with numerous opportunities for refinement. These challenges offer pathways for future investigations, including incorporating temperature dependencies and handling the extreme limits of perfect scattering, which would further extend the applicability of our model in diverse astrophysical and planetary contexts.

One promising direction is the extension of this framework to more complex atmospheric scenarios. Specifically, the $M(\mu,U,\tilde{\omega}_0)$ function, derived for the semi-infinite atmosphere, has a direct counterpart in the $V(\mu)$ function, which is applicable to finite atmospheres as shown in \cite{sengupta2022atmospheric}. Following the approach outlined in this study, the $V(\mu)$ function can also be derived and calculated for various cases of isotropic scattering and thermal emission. This would enable a more comprehensive treatment of radiative transfer in both finite and semi-infinite atmospheric contexts, providing greater applicability to a range of astrophysical environments.

In the realm of exoplanetary science, atmospheric spectra for various exoplanets, particularly Hot Jupiter-type planets, have been modeled to match observational data, including reflection, transmission, and emission spectra \cite{tinetti2013spectroscopy}. For such planets, incorporating the thermal emission contribution in the form of the $M(\mu)$ function is crucial. The exoplanetary atmosphere is modeled directly using the $H(\mu)$ function considering only the scattering phenomena \cite{madhusudhan2012analytic}. Extending the present work to model the atmospheric emission spectra of exoplanets during the secondary eclipse \citep{hansen2008absorption}, accounting for both scattering and thermal emission, would enhance our ability to interpret observed data and improve our understanding of the atmospheric composition and dynamics of these distant worlds.

Additionally, the study of tidally locked gas giants offers another exciting area for future exploration. These planets experience significant day-night temperature contrasts, which influence atmospheric circulation patterns and heat redistribution. The {day-side} emission spectra of such planets, when combined with the scattering effects considered here, could provide deeper insights into their atmospheric structure and behavior. Studies of this nature would benefit from the inclusion of the $M(\mu,U,\tilde{\omega}_0)$ function in models of planetary emission spectra, as discussed in \cite{sengupta2023atmospheric}. The interplay between emission and scattering in tidally locked exoplanets represents a complex yet fascinating avenue for further research, with important implications for the habitability of these distant worlds.
\subsection{Applicability in case of exoplanetary system}
{\color{SS}In this work, we find that the contribution of thermal emission becomes significant depending on the dimensionless parameter \( U \), defined as \( U = \frac{B(T)}{F} \), where \( B(T) \) is the Planck function and \( F \) is the incident flux. Here we have studied an example case for exoplanetary atmosphere to apply {this} theory. For an exoplanetary atmosphere, the incident stellar flux at a given wavelength \( \lambda \) can be approximated by
\begin{equation}
F_{\lambda}^{\text{irr}} = \left(\frac{R_*}{D}\right)^2 B_{\lambda}^{\text{star}}(T_*)
\end{equation}

where \( T_*, R_* \), and \( D \) denote the stellar temperature, stellar radius, and the star–planet separation, respectively \cite{guillot2010radiative}. Assuming both the star and the planet behave as blackbodies, the planetary thermal emission at the same wavelength can be written as \( B_{\lambda}^p(T_{\text{eq}}) \), where \( T_{\text{eq}} \) is the planet's equilibrium temperature. The thermal emission coefficient at that wavelength is then given by

\begin{equation}\label{eq:U_exoplanet}
U_\lambda = \left(\frac{D}{R_*}\right)^2 \frac{B_{\lambda}^p(T_{\text{eq}})}{B_{\lambda}^{\text{star}}(T_*)}
\end{equation}
In this study, we have shown that the coupled emission and scattering regime is physically meaningful only for values of \( U < 0.7{\color{MS}21} \), within which the function \( M(\mu, U_\lambda, \tilde{\omega}_0) \) converges. Moreover, the larger the value of \( U \) within this convergence range, the more pronounced the thermal emission effect along with scattering.

As a case study, we consider the exoplanet K2-137b, an Earth-sized planet orbiting an M-dwarf star. The parameters are as follows: \( T_{\text{eq}} = 1471\,\text{K} \), \( T_* = 3492\,\text{K} \), \( R_* = 0.442\,R_\odot \), and orbital distance \( D = 0.0058\,\text{AU} \), as reported in \cite{smith2018k2}. Incorporating this system parameters we compute and plot  Fig.~\ref{fig:K2137b}  which shows the variation of the thermal emission coefficient \( U_\lambda \) as a function of wavelength for the ultra-short-period exoplanet K2-137b, assuming blackbody spectra for both the planet and its host M-dwarf star. The resulting curve demonstrates a monotonic increase in \( U_\lambda \) with wavelength.

Three distinct radiative regimes are indicated in the plot:

\begin{itemize}
    \item \textit{Scattering-dominated regime} (\( \lambda < 0.85\,\mu\mathrm{m} \), shaded blue): Here, \( U_\lambda \ll 1 \), implying that the incoming stellar irradiation dominates the radiative budget. The thermal emission from the planet is negligible, and the radiative transfer is governed primarily by scattering processes. In this region the diffuse reflection theory of \cite{Chandrashekhar} can be used without any modification.

    \item \textit{Simultaneous emission–scattering regime} (\( 0.85\,\mu\mathrm{m} < \lambda < 2.5\,\mu\mathrm{m} \), shaded pink): This region corresponds to \( 0.1 \lesssim U_\lambda \lesssim 0.7 \), where both thermal emission and scattering contribute significantly. Within this range, the function \( M(\mu, U_\lambda, \tilde{\omega}_0) \) has been shown to converge, marking it as the valid regime for the theoretical model of coupled emission and scattering proposed in \cite{sengupta2021effects,sengupta2022atmospheric} and developed in this work.

    \item \textit{Emission-dominated regime} (\( \lambda > 2.5\,\mu\mathrm{m} \), shaded orange): In this domain, \( U_\lambda > 0.7 \), suggesting that thermal emission dominates the radiative transfer. The \( M(\mu, U_\lambda, \tilde{\omega}_0) \) diverging in this region, and the radiative transfer is dominated by emission. Hence diffuse reflection theory loses it's meaning and applicabilty in the region and further.
\end{itemize}

The spectral window between \( 0.85\,\mu\mathrm{m} \) and \( 2.5\,\mu\mathrm{m} \) is thus identified as the optimal regime for validating the theoretical framework under conditions where both scattering and emission are relevant. In the observational prospect, several instruments like JWST-NIRSpec, HST-WFC3, ARIEL enable potential empirical validation of the emission–scattering model across a range of hot exoplanets, with K2-137b serving as a promising initial candidate. A comprehensive statistical study, observational sensitivity for multiple exoplanets, however, is out of the scope of the present article and left for future work.

\begin{figure}[h]
\centering
\includegraphics[width=0.7\textwidth]{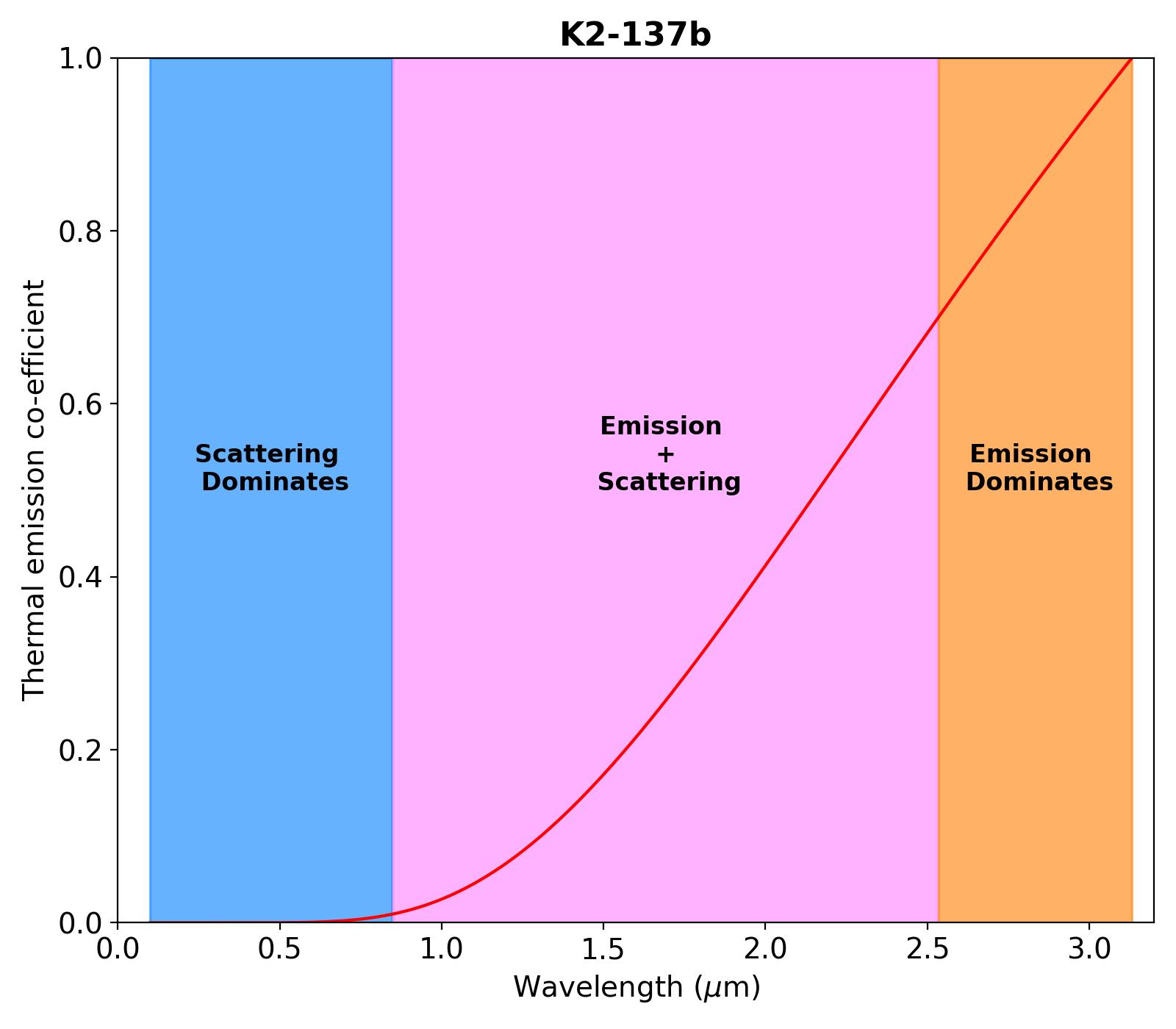}
\caption{Thermal emission coefficient U as a function of wavelength for the exoplanet K2-137b using the formulation eqn.\eqref{eq:U_exoplanet}. The plot illustrates three radiative regimes: a scattering-dominated regime ($U\to 0$
 ), a transitional region where both thermal emission and scattering contribute ($0\leq U<0.7$), and an emission-dominated regime ($U>0.7$). These divisions are essential for identifying spectral regions where the theoretical framework of coupled emission and scattering remains valid.}
\label{fig:K2137b}
\end{figure}

}

\section{Conclusion}\label{sec: conclusion}

{This work provides a quantitative extension of Chandrasekhar’s diffuse reflection theory to atmospheres where thermal emission and scattering act simultaneously. The formulation retains the correct classical limit while introducing a physically consistent framework for thermally active systems.}
The theorems and values presented here offer a robust mathematical framework for modeling radiative transfer in atmospheres where both scattering and thermal emission play significant roles. This is particularly important for astrophysical phenomena involving exoplanets, Hot Jupiters, and brown dwarfs, where both thermal emission and scattering contribute comparably to the final emergent radiation. Our results directly address the need for accurate modeling of the reflected specific intensity in such systems, which is crucial for interpreting transmission and reflection spectra from observational data.

Moreover, the ability to quantify the $M(\mu,U,\tilde{\omega}_0)$ function and its moments opens the door for more precise predictions of radiative properties in thermally active atmospheres. This is expected to have important implications for the study of habitability, energy balance, and atmospheric composition in exoplanet research, as well as for more general applications in planetary science and astrophysics.
Overall, these potential research directions open new opportunities to refine and expand upon the current understanding of radiative transfer in astrophysical and planetary atmospheres. By addressing these complex cases and incorporating additional physical processes, future studies can provide a more holistic view of the radiative properties of exoplanets and other celestial bodies.
In conclusion, the theorems developed in this work provide a comprehensive, generalized framework for understanding radiative transfer in semi-infinite atmospheres with both scattering and thermal emission, making a significant contribution to the field of radiative transfer theory and its applications in astrophysics and beyond.

\textit{Acknowledgement:}
{Soumya Sengupta (SS) acknwoledges the support by the European Research Council (ERC) under the European Union’s Horizon 2020 research and innovation program (ERC Starting Grant ``IMAGINE'' No. 948582, PI: Daniele Vigan\`{o}) and  by the Spanish program Unidad de Excelencia María de Maeztu, awarded to the Institute of Space Sciences (ICE-CSIC), CEX2020-001058-M. SS is grateful to Daniele Vigan\`{o} for the insightful suggestions to improve this theoretical work up to the level of observation. Also thanks to Ignasi Ribas and Juan Carlos Morales for their insightful comments. 
MS: This work was supported by the Institute for Basic Science (IBS-R035-C1).
We additionally thank the referee for a careful reading of the manuscript and for constructive comments that significantly improved the clarity and presentation of this work.}

\newpage
\appendix

\section{{Alternate proof of the zeroth moment}}\label{appendix}

{Here we present an alternate way to prof \textit{Theorem} \ref{thrm: THEOREM 1}. This method is more direct than that presented in the main text.}

\begin{proof}
{For n=0, eqn.\eqref{eq: M-moments} can be represented as,} 
    \begin{equation}
    \begin{split}
        A_0 = \int_0^1 M(\mu) d\mu=
        1 + 2U(T)\int_0^1 M(\mu)\mu \log(1+\frac{1}{\mu}) d\mu + \frac{\tilde{\omega_0}}{2}\int_0^1 \int_0^1 \frac{\mu}{\mu+\mu'}M(\mu)M(\mu')d\mu'
    \end{split}
\end{equation}

{Switching $\mu$ and $\mu'$ in the last expression and taking average we will get,}

\begin{equation}
    \begin{split}
        A_0 =&
        1 + 2U(T)\int_0^1 M(\mu)\mu \log(1+\frac{1}{\mu}) d\mu + \frac{\tilde{\omega_0}}{4}\int_0^1 \int_0^1 M(\mu)M(\mu')d\mu'd\mu\\
        =&
        1  + 2U(T)\int_0^1 M(\mu)\mu \log(1+\frac{1}{\mu}) d\mu + \frac{\tilde{\omega_0}}{4}A_0^2\\
        =&
        R + \frac{\tilde{\omega_0}}{4}A_0^2
    \end{split}
\end{equation}

{Now this quadratic equation has the following possible solutions,}
\begin{equation*}
    \frac{\Tilde{\omega}_0}{2} A_0 = 1 \pm \sqrt{1 - \Tilde{\omega}_0R}
\end{equation*}

 {It is evident that the left hand side will uniformly converges to zero with $\Tilde{\omega}_0$. This is possible only if we choose the negative sign in the right hand side. Hence the equation for $A_0$ will become,}
 \begin{equation}
     A_0 = \frac{2}{\Tilde{\omega}_0}(1 - \sqrt{1 - \Tilde{\omega}_0R})
 \end{equation}
{ This is the same as eqn.\eqref{eq: theorem 1}. Hence proved \textit{Theorem} \ref{thrm: THEOREM 1}.}
\end{proof}
\newpage
{\software{NumPy \citep{2020NumPy-Array}, SciPy \citep{2020SciPy-NMeth}}}


\bibliography{M_function_paper}
\bibliographystyle{aasjournal}
\end{document}